\newtheorem{theorem}{\textbf{Theorem}}[section]
\newtheorem{lemma}[theorem]{\textbf{Lemma}}
\newtheorem{corollary}[theorem]{\textbf{Corollary}}
\newtheorem{proof}{Proof}
\newtheorem{property}{\textbf{Property}}
\newtheorem{remark}{\textbf{Remark}}
\begin{document}

\begin{frontmatter}

\title{Novel Nussbaum-Type Function based Safe Adaptive Distributed Consensus Control with Arbitrary Unknown Control Directions}

\thanks[footnoteinfo]{Corresponding author.}

\author[label0]{Dan Qiao
},\author[label0]{Zhaoxia Peng\thanksref{footnoteinfo}}\ead{pengzhaoxia@buaa.edu.cn},
\author[label2]{Guoguang Wen},
\author[label11]{Tingwen Huang}

\address[label0]{School of Transportation Science and Engineering, Beihang University, Beijing, 100191,P.R.China}
\address[label2]{Department of Mathematics, Beijing Jiaotong University, Beijing 100044, P.R.China}
\address[label11]{Science Program, Texas A\&M University at Qatar, Doha, Qatar}

\begin{keyword}                           
Multi-Agent System, Unknown Control Directions, Arbitrary Non-Identical, Control Shock.              
\end{keyword}                             

\begin{abstract}                          
Existing Nussbaum function based methods on the consensus of multi-agent systems require (partial) identical unknown control directions of all agents and cause dangerous dramatic control shocks. This paper develops a novel saturated Nussbaum function to relax such limitations and proposes a Nussbaum function based control scheme for the consensus problem of multi-agent systems with arbitrary non-identical unknown control directions and safe control progress. First, a novel type of the Nussbaum function with different frequencies is proposed in the form of saturated time-elongation functions, which provides a more smooth and safer transient performance of the control progress. Furthermore, the novel Nussbaum function is employed to design distributed adaptive control algorithms for linearly parameterized multi-agent systems to achieve average consensus cooperatively without dramatic control shocks. Then, under the undirected connected communication topology, all the signals of the closed-loop systems are proved to be bounded and asymptotically convergent. Finally, two comparative numerical simulation examples are carried out to verify the effectiveness and the superiority of the proposed approach with smaller control shock amplitudes than traditional Nussbaum methods.
\end{abstract}

\end{frontmatter}

\section{Introduction}

Since the pioneering works \cite{vicsek,olfati,ren2005}, distributed multi-agent systems (MASs) have witnessed a flourishing development due to their robustness, flexibility, and collaboration ability. In the consensus problem of MASs, all agents asymptotically reach an agreement on their states by relying only on local interaction protocols with their neighborhoods. Subsequently, consensus problems of MASs have been discussed in a lot of real applications, including smart grids\cite{smartgrid}, multi-robots formation control \cite{fuMRS2014}, and social networks \cite{ye2020tac} to name a few. However, most of these results are developed based on the assumption that the control directions of the inputs, namely, the motion directions, in the systems are known, which may not be obtained directly as a \textit{priori} information in the practical applications such as the autopilot design of time-varying ships \cite{autoship} and the uncalibrated visual servo \cite{calibri}.

Considering the fail or destruction led by ignoring the unknown control directions (UCDs), this problem with both theoretical and practical significance has been extensively discussed in numerous systems over the years. In \cite{nussbaum1983}, a nonlinear Nussbaum-type control gain in the form of $f(k)\sin{k}$ was originally proposed to solve the unknown signs of the control inputs for MIMO systems. Even if other methods such as nonlinear PI control  \cite{haris2016tacPI,haris2018tacPI,haris2019PIerror} and switching mechanism \cite{huang2020IJRNCuncontinous} have been applied to deal with the UCDs problem recently, the Nussbaum-type function based control scheme is still the most effective method and has been widely applied in multiple SISO systems and MIMO systems such as \cite{1998single}.

However, the traditional Nussbaum functions cannot be directly applied to the stability analysis of MASs since the interactions among agents with multiple arbitrary UCDs will bring many new problems to the systems. In \cite{renwei2014tac}, Chen \textit{et al.} firstly discussed the UCDs problem in MASs, and proposed a significant novel Nussbaum function $\cosh{(\lambda k)}\sin{k}$ instead of the previous standard Nussbaum functions $k^2\cos{k}, {e}^{k^2}\cos{k}$, and $k^2\sin{k}$ to overcome the obstacles of the coexistence of the multiple Nussbaum functions in the same conditional inequality. 

Since then, based on the Nussbaum functions and the proof framework established in \cite{renwei2014tac}, a large number of results addressing UCDs with various other issues have emerged continuously in MAS. In \cite{Pengjm2014SCL,Su2015tac,Ding2015auto,liulu2015tac,liulu2016tac,Ma2017math,Huangjie2018tac}, the adaptive stabilization problem, asymptotic regulation problem, and output regulation problem with UCDs were investigated in numerous multi-agent systems including first-order agents, second-order agents, high order agents, etc. In \cite{HTWMahui2019ppc,LiTieShanWangWei2020PPC}, the convergent rate and the tracking errors of the cooperative control algorithms for high-order systems with unknown nonlinearities were controllable by leveraging prescribed performance methods and barrier Lyapunov functions. In \cite{LiHY2019trigger,Mahui2019TFuzzytrigger,guanZH2020trigger}, the event-triggered mechanism was employed for MAS with UCDs to realize the consensus objective with reduced signal transmission and resource-saving communication methods. In \cite{LiHYlurenquan2021TNNLSFTC}, Dong \textit{et al.} investigated the finite-time fault-tolerant consensus control for the nonlinear MAS with UCDs against the actuator bias fault and output deadzone. It is noted that the discussion of the Nussbaum functions itself was not enough in the above works. There are still two open problems for Nussbaum functions that need to be investigated, i.e., the control shock and the identical UCDs assumption.

On the one hand, the rapid growth of the exponential term in Nussbaum functions and the sign-changing of the trigonometric term will cause a sharp change in the control input, a.k.a., the control shock which greatly affects the transient performance of the system and causes severe shocks in the initial stage of control as shown in  \cite{renwei2014tac}, Fig. 2(a) and Fig. 3(b),  \cite{HUANG2018auto} Fig. 7,  \cite{LiHY2019trigger}, Fig. 4, etc. For such a problem, Chen \textit{et al.} transferred the traditional amplitude-elongation Nussbaum functions into a novel time-elongation one which could eliminate the control shock by designing a saturated piecewise function to suppress the upper bounds of the gains with identical UCDs \cite{chenci2016Saturated}.

On the other hand, most of the results above were based on the strict assumption that all the unknown control directions in the systems should be identical, which is a huge limitation in practical applications. Consequently, Chen \textit{ et al.} relaxed this requirement to partially non-identical UCDs circumstance by designing two piecewise Nussbaum functions to approximate the unknown gains of two groups of agents with different signs \cite{chenci2017tac}. This breakthrough which needs to know the sign of one group agents' UCDs has been widely applied in  time-varying MAS and nonaffine MAS with actuator nonlinearities and output constraints recently \cite{Xiekan2018cyber,BoFan2019TACpartial}.

Furthermore, in \cite{HUANG2018auto}, Huang \textit{et al.} completely solved  the hypothesis of the identical UCDs by proposing a set of Nussbaum functions in the same form with different frequencies for each agent. The exponential growth rate of ${e}^{\alpha k}$ is used to ensure that there must be a minimum value in the same interval of each agent to construct contradictions, where the $sin$ terms with different lengths are all negative. Similarly, in \cite{Zhang2021TAC}, a novel Nussbaum function has been developed with different frequencies to deal with a class of interconnected uncertain systems that are composed of multiple subsystems with different yet unknown control directions. The input shock cannot be avoided in these approaches. To the best of our knowledge, it remains still an open problem to develop a novel Nussbaum function to tackle the rigid limitation of identical UCDs and transient performance under a unified function.

In this paper, for the first time, we address this problem by proposing a novel Nussbaum function to allow arbitrary non-identical UCDs and to avoid control shocks. It is proved that the proposed Nussbaum function can guarantee the asymptotical consensus of multi-agent systems with a smoother and safer control progress than previous works.  The main contributions of this paper are the following. (i) A novel time-elongation saturated Nussbaum function with limited amplitude bound of Nussbaum gain and different frequency for each agent such that the multiple Nussbaum functions with interconnections could be analyzed in a single inequality.  (ii) The dramatic control shock  and the assumption of (partial) identical UCDs have been simultaneously relaxed by establishing a novel analysis framework for the inequality analysis of multiple Nussbaum functions, which leverages the proportional relationship between different Nussbaum functions. This idea could help inspire the future development of a family of Nussbaum functions for multi-agent systems. (iii) Asymptotic average consensus of the multi-agent system is achieved in the presence of uncertain control gains, which has wild applications in real engineering areas.

The organization of this paper is as follows. Preliminaries and problem formulation are provided in Section \ref{sec2}. The proposed novel Nussbaum functions are presented in Section \ref{sec3}. Two distributed adaptive Nussbaum-based consensus strategy is proposed in Section \ref{sec4}. Numerical simulations and comparative study are given in Section \ref{sec5}, and Section \ref{sec6} concludes our work in this paper.

\section{Preliminaries and Problem Formulation}\label{sec2}

\subsection{Graph Theory}
Consider a leaderless multi-agent system consisting of multiple agents. Index the agents as $1, 2, \dots, N$. The interaction relationship among the agents is represented by an undirected graph $\mathcal{G}=[\mathcal{V},\mathcal{E},\mathcal{A}]$, in which $\mathcal{V}=\left\{ {{n}_{1}},\ldots ,{{n}_{N}} \right\}$ represents a set of nodes, $\mathcal{E}\subseteq {{\mathcal{V}}_{N}}\times {{\mathcal{V}}_{N}}$ represents a set of edges, and $\mathcal{A}=\left[ {{a}_{ij}} \right]\in {{\mathbb{R}}^{N\times N}}$ represents the adjacency matrix of graph $\mathcal{G}$. Here, ${{a}_{ij}}$ is the weight of the link $\left( {{n}_{i}},{{n}_{j}} \right)$ with ${{a}_{ij}}>0$ if and only if  $\left( {{n}_{i}},{{n}_{j}} \right)\in\mathcal{E}$, ${a}_{ij}=0$  otherwise. Note that ${a}_{ij}={a}_{ji}, \forall {i}\neq{j}$  and ${a}_{ii}=0$  since the graph $\mathcal{G}$  is undirected with no self-edges. The Laplacian matrix ${L}=\left[{l}_{ij}\right]\in {{\mathbb{R}}^{N\times N}}$  is defined with ${l}_{ij}=-{a}_{ij}, \forall {i} \neq {j}$  and ${{l}_{ii}}=-\sum\limits_{i=1}^{N}{{{a}_{ij}}},\quad i,j\in 1,2,\ldots,N $. For graph $\mathcal{G}$, the following lemma is satisfied.
\begin{lemma}
	Assume that the topology graph $\mathcal{G}$ is undirected. Then, there must be at least one zero eigenvalue and all other nonzero eigenvalues are positive in $L$. Besides, the matrix $L$  is symmetric positive semi-definite for the connected undirected graph $\mathcal{G}$.
\end{lemma}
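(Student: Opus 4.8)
The plan is to prove everything from the quadratic-form representation of $L$. First I would settle symmetry: since the graph is undirected, $a_{ij}=a_{ji}$, hence $l_{ij}=l_{ji}$ for all $i,j$ and $L=L^{\top}$; by the spectral theorem $L$ then has only real eigenvalues and an orthonormal eigenbasis. Next I would exhibit a zero eigenvalue explicitly: because the $i$-th row of $L$ has diagonal entry $\sum_{j\neq i}a_{ij}$ and off-diagonal entries $-a_{ij}$, every row sums to zero, so $L\mathbf{1}=0$ with $\mathbf{1}=[1,\dots,1]^{\top}$. This already gives the ``at least one zero eigenvalue'' part of the claim.

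The key step is the identity
\begin{equation}
x^{\top}Lx=\frac{1}{2}\sum_{i=1}^{N}\sum_{j=1}^{N}a_{ij}\,(x_i-x_j)^2\ge 0,\qquad \forall x\in\mathbb{R}^{N},
\end{equation}
which I would derive by expanding the double sum and matching the coefficients of $x_i^2$ and of $x_ix_j$ against the definitions of $l_{ii}$ and $l_{ij}$. Since $a_{ij}\ge 0$, the right-hand side is nonnegative, so $L$ is symmetric positive semi-definite; consequently all eigenvalues are nonnegative, and together with $L\mathbf{1}=0$ this forces every nonzero eigenvalue to be strictly positive.

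Finally, to show exactly one eigenvalue vanishes when $\mathcal{G}$ is connected, I would take any $x$ in the kernel of $L$: then $x^{\top}Lx=0$, which by the displayed identity forces $a_{ij}(x_i-x_j)^2=0$ for all $i,j$, i.e. $x_i=x_j$ whenever $(n_i,n_j)\in\mathcal{E}$. Propagating this equality along a path between any two vertices — such a path exists by connectivity — shows $x$ is a scalar multiple of $\mathbf{1}$, so $\ker L$ is one-dimensional and $0$ is a simple eigenvalue. I expect the only mildly delicate point to be the index bookkeeping in the quadratic-form expansion (and noting that the sign in the stated formula for $l_{ii}$ should be read as $l_{ii}=\sum_{j\neq i}a_{ij}$); the connectivity argument is the conceptual heart of the statement but becomes short once the quadratic form is in hand.
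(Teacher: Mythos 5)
Your proof is correct and complete: symmetry from $a_{ij}=a_{ji}$, the row-sum observation giving $L\mathbf{1}=0$, the quadratic-form identity $x^{\top}Lx=\tfrac{1}{2}\sum_{i,j}a_{ij}(x_i-x_j)^2\ge 0$ for positive semi-definiteness, and the connectivity argument pinning $\ker L=\mathrm{span}\{\mathbf{1}\}$ are exactly the standard textbook route; you also correctly flag the sign slip in the paper's formula for $l_{ii}$, which should indeed read $l_{ii}=\sum_{j\neq i}a_{ij}$. Note that the paper itself states this lemma without proof as a known fact from algebraic graph theory, so there is no in-paper argument to compare against; your write-up in fact establishes slightly more than is claimed (simplicity of the zero eigenvalue under connectivity), which is harmless.
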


\subsection{Nussbaum Function}
Since being proposed in \cite{nussbaum1983}, Nussbaum functions have been widely applied to handle the problem of unknown control directions. Commonly, a Nussbaum-type function satisfies the following properties:

\begin{property} \cite{nussbaum1983} A continuous function $N_i(\chi_i)$ is one choice of the Nussbaum-type functions, if it satisfies
	\begin{align}
		\nonumber	{\lim_{\chi_i \to \infty}}\sup\frac{1}{\chi_i}\int_{0}^{\chi_i}N_{i}(\tau)d\tau=\infty\\
		{\lim_{\chi_i \to \infty}}\inf\frac{1}{\chi_i}\int_{0}^{\chi_i}N_{i}(\tau)d\tau=-\infty.
	\end{align}
\end{property}

\subsection{Problem Formulation}

Consider a multi-agent system containing $N$ first-order linearly parameterized agents with the $i$-th agent being described as follows \cite{renwei2014tac}:
\begin{equation}\label{system}
	\dot{x_i}={\varrho}_{i}{u}_{i}+{\psi}_{i}{\theta}_i
\end{equation}
where $x_i\in\mathbb{R}$ is the state of the agent $i$, $u_i\in\mathbb{R}$ is the control input of the agent $i$, $\rho_i$ is an unknown constant with arbitrary signum which indicates that the unknown control directions of all the agents are non-identical, satisfying $|\rho_i|\in[
\rho_{min},\rho_{max}]$, where $0<\rho_{min}\le\rho_{max}$. $\psi_{i}\in \mathbb{R}^{n}$ represents the regression vector of the linear parameterization terms and $\theta_i$ represents the unknown parameter vector of the linear parameterization terms.

The control objective in this paper is to design the control input $u_i$ for each agent to enable the states of all the agents $x_i, i=1,2,\ldots,N$ achieve consensus, i.e., $\lim\limits_{t\to\infty}|x_i-x_j|=0, i,j=1,2,\ldots,N$ under the undirected and connected communication graph $\mathcal{G}$.
\begin{remark}
	In many practical applications, there exist many kinds of input nonlinearities in the actuators, such as input saturation, input dead zone, backlash-like hysteresis, unknown control directions, etc. As mentioned in \cite{Xiekan2018cyber}, these nonlinearities of the input signals can be modeled as bounded unknown non-zero control coefficients $\rho_i$ with the uncertain signums, i.e., the unknown control directions. It greatly promotes the possibility of practical application of the consensus theory. In particular, different from the overly strict assumptions that $\rho_i$ of all the agents have the identical or partial identical signums in \cite{renwei2014tac}-\cite{LiHYlurenquan2021TNNLSFTC} and \cite{chenci2016Saturated}-\cite{BoFan2019TACpartial}, in this article we assume that all the agents have arbitrary non-identical unknown control directions, i.e., $\rho_i$ with arbitrary signums.
\end{remark}

\begin{lemma}\label{barbalat}
	(Barbalat's Lemma \cite{renwei2014tac}): Considering that  $V(t):\mathbb{R}\to\mathbb{R}$ is a uniformly continuous function for $t>0$, if the limit of the time integral $\lim\limits_{t\to\infty}\int_{0}^{t}V(s)ds$ exists with a finite boundary, then we have $\lim\limits_{t\to\infty}V(t)=0$.
\end{lemma}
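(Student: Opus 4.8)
The plan is to argue by contradiction, playing the Cauchy criterion for the convergent integral against the uniform continuity of $V$: if $V$ did not tend to zero, uniform continuity would force $V$ to stay bounded away from zero (with a fixed sign) on windows of a fixed length, producing "bumps" in the integral that contradict its convergence. So I would begin by supposing the conclusion fails; then there exist a constant $\varepsilon_0 > 0$ and an unbounded increasing sequence $t_n \to \infty$ with $|V(t_n)| \ge \varepsilon_0$ for all $n$, and without loss of generality $t_n > 0$ for all $n$.

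Next I would invoke uniform continuity of $V$ on $(0,\infty)$ to pick $\delta > 0$ with $|V(t) - V(s)| < \varepsilon_0/2$ whenever $|t-s| < \delta$. Then on each window $[t_n, t_n + \delta] \subset (0,\infty)$ we have $|V(s)| > \varepsilon_0/2$; moreover $V$ cannot change sign on this window, since doing so would require it to attain a value of magnitude at most $\varepsilon_0/2$ there. Hence $V$ has constant sign on $[t_n, t_n+\delta]$, and therefore
\[
\left| \int_{t_n}^{t_n+\delta} V(s)\, ds \right| = \int_{t_n}^{t_n+\delta} |V(s)|\, ds > \frac{\varepsilon_0 \delta}{2} > 0
\]
for every $n$, a lower bound that does not depend on $n$.

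Finally I would close the argument using the hypothesis that $\lim_{t\to\infty} \int_0^t V(s)\, ds$ exists and is finite. By the Cauchy criterion, $\int_a^b V(s)\, ds \to 0$ as $a, b \to \infty$ with $b > a$. Taking $a = t_n$ and $b = t_n + \delta$ (so that $b - a = \delta$ is fixed while both endpoints tend to infinity) gives $\int_{t_n}^{t_n+\delta} V(s)\, ds \to 0$, which contradicts the uniform lower bound just obtained. Hence $\lim_{t\to\infty} V(t) = 0$, and by Lemma~\ref{barbalat}'s setting the result follows.

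I do not expect a genuine obstacle here, as this is a classical fact; the only point meriting care is the sign-constancy claim on each window $[t_n, t_n+\delta]$, which is what legitimizes replacing $\bigl|\int V\bigr|$ by $\int |V|$ and hence the strict positive bound. A minor alternative, if one prefers to avoid quoting the Cauchy criterion explicitly, is to thin out the sequence $\{t_n\}$ so that the windows are pairwise disjoint and then note that the partial integrals $\int_0^t V$ would have to increase (or decrease) by at least $\varepsilon_0\delta/2$ infinitely often, again contradicting convergence.
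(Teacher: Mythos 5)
Your proof is correct: it is the standard contradiction argument for Barbalat's lemma --- uniform continuity yields windows $[t_n,t_n+\delta]$ of fixed length on which $|V|>\varepsilon_0/2$ with constant sign, and the Cauchy criterion for the convergent limit of $\int_0^t V(s)\,ds$ forces $\int_{t_n}^{t_n+\delta}V(s)\,ds\to 0$, a contradiction. The paper does not prove this lemma at all; it simply quotes it from the cited reference, so your write-up supplies exactly the classical argument that the citation stands in for, and your alternative closing remark (disjoint windows forcing the partial integrals to drift by at least $\varepsilon_0\delta/2$ infinitely often) is an equally valid way to finish. The only microscopic point to polish: uniform continuity with $|t-s|<\delta$ gives $|V(s)|>\varepsilon_0/2$ only on $[t_n,t_n+\delta)$, so either shrink the window to $[t_n,t_n+\delta/2]$ or observe that excluding the single endpoint does not affect the integral lower bound.
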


\section{Novel Saturated Nussbaum Functions}\label{sec3}

\begin{figure}
	\begin{center}
		\includegraphics[width=0.5\textwidth]{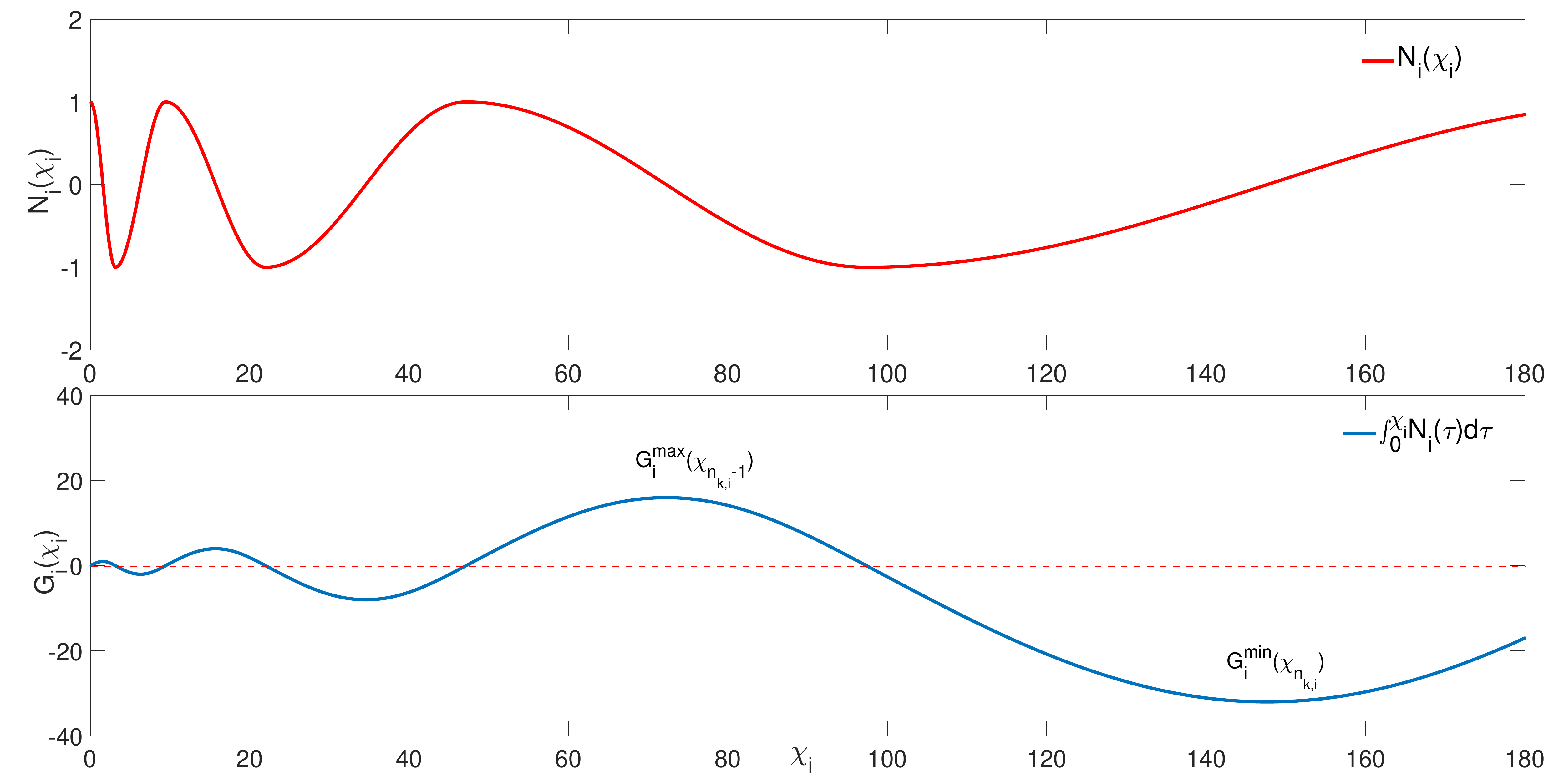}
		\makeatletter\def\@captype{figure}\makeatother
		\caption{An illustration of $N_{i}(\chi_i)$ and $G_{i}(\chi_i)$.}
	\end{center}                       
\end{figure}

Nussbaum function based control scheme has been wildly employed to solve the unknown control directions problem in the consensus problem of multi-agent systems. However, most existing Nussbaum functions are hard to handle the control shock problem and arbitrary unknown control directions in the multi-agent system  simultaneously. The reason is that non-identical directions may cause the offset of the coexisting multiple Nussbaum functions in the same conditional inequality, and exponential amplitude-elongation Nussbaum functions will cause dramatical control shocks in the early stage of control.

To overcome this obstacle, a novel saturated time-elongation Nussbaum function with different frequencies is now proposed as follows:
\begin{small}
	\begin{align}\label{OriNuss}
		\begin{split}
			&N_i(\chi_i)= \\
			&\left \{
			\begin{array}{ll}
				\vdots, & \vdots\\
				{(-1)}^{n_{k,i}-1}a_i\cos[{\frac{1}{{b_{i}}^{n_{k,i}-1}T_{i}}({\chi_i}+\chi_{n_{k,i}-1}})],                                 & (-\chi_{n_{k,i}}, -\chi_{n_{k,i}-1}]\\
				\vdots, & \vdots\\
				-a_i\cos[{\frac{1}{b_{i}T_{i}}({\chi_i}+T_{i}\pi})],                    & (-\pi-b_{i}\pi, -\pi]\\
				a_i\cos{\frac{1}{T_i}{\chi_i}},     & (-\pi, \pi)\\
				-a_i\cos[{\frac{1}{b_{i}T_{i}}({\chi_i}-T_{i}\pi})],                                 & [\pi, \pi+b_{i}\pi)\\
				\vdots, & \vdots\\
				{(-1)}^{n_{k,i}-1}a_i\cos[{\frac{1}{{b_{i}}^{n_{k,i}-1}T_{i}}({\chi_i}-\chi_{n_{k,i}-1}})],                                 & [\chi_{n_{k,i}-1}, \chi_{n_{k,i}})\\
				\vdots, & \vdots
			\end{array}
			\right.
		\end{split}
	\end{align}
\end{small}
where $a_i$, $b_i$, and $T_i$ are positive constants, which satisfy $a_i=\frac{{b_{i+1}^{M-1}}}{\sum_{n=0}^{M-1}{b_{i+1}}^{n}}a_{i+1}$, $b_i={b_{i+1}}^M$, $T_i=(\sum_{n=0}^{M-1}{b_{i+1}}^{n})T_{i+1}$, and $b_i>1$; $n_{k,i}$ is defined as a positive integer which represents the ${n_{k,i}}$-th segment $[\chi_{n_{k,i}-1}, \chi_{n_{k,i}})$ on the picewise Nussbaum function of $i$-th agent; and $\chi_{n_{k,i}}$ is defined as $\chi_{n_{k,i}}=T_{i}\sum_{n=1}^{n_{k,i}}{b_{i}}^{n-1}\pi$. $M\geq4$ is a positive integer.

\begin{lemma}
	The proposed saturated function (\ref{OriNuss}) belongs to Nussbaum functions when it satisfies the \textbf{Property 1}.
\end{lemma}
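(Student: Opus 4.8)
The plan is to prove the lemma exactly in the way its statement indicates: a continuous function is of Nussbaum type as soon as the two conditions of \textbf{Property~1} hold, so it suffices to check that $N_i$ in (\ref{OriNuss}) is continuous and that there are two sequences $\chi_i\to\infty$ along which $\frac{1}{\chi_i}\int_0^{\chi_i}N_i(\tau)\,d\tau$ is driven to $+\infty$ and to $-\infty$ respectively. First I would clear the elementary points. The breakpoints $\chi_{n}=T_i\pi\,\frac{b_i^{\,n}-1}{b_i-1}$, $n\ge 1$ (the quantities written $\chi_{n_{k,i}}$ in (\ref{OriNuss})), form an increasing sequence tending to $\infty$ because $b_i>1$, so the branches of (\ref{OriNuss}) tile all of $\mathbb{R}$. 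At the right end of the $n$-th positive branch the cosine argument equals $\pi$, so the value there is $(-1)^{n-1}a_i\cos\pi=(-1)^{n}a_i$, which matches the value $(-1)^{n}a_i\cos 0$ at the left end of the $(n{+}1)$-th branch, and the same matching holds at the endpoints $\pm\pi$ of the central branch; hence $N_i$ is continuous. Finally $N_i$ is even, so it is enough to treat $\chi_i\to+\infty$.

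The core of the argument is a branch-by-branch evaluation of the integral. On the $n$-th positive branch $[\chi_{n-1},\chi_{n})$ the antiderivative is immediate,
\[
\int_{\chi_{n-1}}^{\chi}N_i(\tau)\,d\tau=(-1)^{n-1}a_i\,b_i^{\,n-1}T_i\,\sin\!\Big[\tfrac{\chi-\chi_{n-1}}{b_i^{\,n-1}T_i}\Big],\qquad \chi\in[\chi_{n-1},\chi_{n}),
\]
and this yields two facts. First, the expression vanishes at $\chi=\chi_{n}$ (the argument becomes $\pi$), so over any whole number of branches the partial integral collapses: $\int_0^{\chi_m}N_i(\tau)\,d\tau=0$ for every breakpoint $\chi_m$; this telescoping is precisely the time-saturation that keeps the gain from exploding. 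Second, at the branch midpoint $\bar\chi_m:=\chi_{m-1}+\tfrac12 b_i^{\,m-1}T_i\pi$ the running integral equals $(-1)^{m-1}a_i\,b_i^{\,m-1}T_i$ (all lower branches contributing zero), while from $\chi_{m-1}=T_i\pi\,\frac{b_i^{\,m-1}-1}{b_i-1}$ one obtains two-sided bounds $\underline{c}_i\,b_i^{\,m-1}\le\bar\chi_m\le\overline{c}_i\,b_i^{\,m-1}$ with positive constants $\underline{c}_i,\overline{c}_i$ depending only on $b_i$ and $T_i$.

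I would then take the two sampling sequences to be the midpoints $\bar\chi_m$ with $m$ odd and with $m$ even, let $m\to\infty$, and combine the two estimates of the previous paragraph to conclude that $\frac{1}{\bar\chi_m}\int_0^{\bar\chi_m}N_i$ is forced to $+\infty$ along the odd midpoints and to $-\infty$ along the even ones, which is exactly \textbf{Property~1}; the $\chi_i\to-\infty$ side then follows by evenness. I expect this last normalization step to be the real obstacle and the genuine content of the construction: since the amplitude $a_i$ is \emph{fixed} rather than exponentially growing as in the classical Nussbaum gains, the divergence cannot be supplied by the amplitude and must be extracted entirely from the geometric elongation $b_i>1$ of the successive half-period lengths $b_i^{\,n-1}T_i$, so the delicate point is to compare with enough care the growth of the partial-integral value $a_i b_i^{\,m-1}T_i$ against the abscissa $\bar\chi_m$ — itself a geometric sum in $b_i$ — and to show that the resulting ratio genuinely escapes every bound rather than merely staying bounded away from zero. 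The inter-agent parameter couplings $a_i=\frac{b_{i+1}^{M-1}}{\sum_{n=0}^{M-1}b_{i+1}^{n}}a_{i+1}$, $b_i=b_{i+1}^{M}$, $T_i=(\sum_{n=0}^{M-1}b_{i+1}^{n})T_{i+1}$ play no role in this single-function property and are retained only for the interaction analysis of Section~\ref{sec4}.
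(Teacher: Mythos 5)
Your preparatory steps (continuity at the breakpoints, evenness/oddness, the branch-by-branch antiderivative, the vanishing of the running integral at every breakpoint, and the midpoint values $(-1)^{m-1}a_i T_i b_i^{m-1}$) reproduce exactly the computation of $G_i$ in (\ref{IntNuss}) and its extrema that the paper carries out. The problem is the step you deferred to the end: it is not merely delicate, it cannot be done. Every branch of (\ref{OriNuss}) has fixed amplitude $a_i$, so $|N_i(\tau)|\le a_i$ for all $\tau$ and therefore $\bigl|\frac{1}{\chi}\int_0^{\chi}N_i(\tau)\,d\tau\bigr|\le a_i$ for every $\chi>0$; the $\limsup$ and $\liminf$ in \textbf{Property 1} are thus trapped in $[-a_i,a_i]$ and can never equal $\pm\infty$. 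Your own two-sided bound $\underline{c}_i b_i^{m-1}\le\bar\chi_m$ already shows this concretely: with $\chi_{m-1}=T_i\pi\frac{b_i^{m-1}-1}{b_i-1}$ and $\bar\chi_m=\chi_{m-1}+\tfrac{\pi}{2}b_i^{m-1}T_i$, the sampled ratio $\frac{1}{\bar\chi_m}\int_0^{\bar\chi_m}N_i$ converges to the finite value $\pm\frac{2a_i(b_i-1)}{\pi(b_i+1)}$, so it stays bounded away from zero but does not escape every bound. The divergence you hoped to extract from the geometric elongation is not there, because the abscissa grows at the same geometric rate $b_i^{m-1}$ as the partial integral.

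For comparison, the paper's own proof makes precisely the leap you were trying to justify: after identifying the extrema $\pm a_iT_ib_i^{n_{k,i}-1}$ of $G_i$, it asserts $\frac{a_iT_ib_i^{n_{k,i}-1}}{\chi_i}=\infty$, which is false since at the extremum $\chi_i$ is itself of order $T_ib_i^{n_{k,i}-1}$. So you have in effect located a genuine gap in the lemma as stated: a bounded function cannot satisfy \textbf{Property 1} literally. What the saturated construction does deliver — and what your telescoping-plus-midpoint computation already proves — is the weaker property that $G_i(\chi_i)=\int_0^{\chi_i}N_i(\tau)\,d\tau$ has unbounded positive and negative excursions, $\limsup_{\chi_i\to\infty}G_i(\chi_i)=+\infty$ and $\liminf_{\chi_i\to\infty}G_i(\chi_i)=-\infty$, and it is this unbounded-swing property (through $G_L^{max}$ and $G_L^{min}$), not the $1/\chi_i$-averaged condition, that the contradiction argument of Theorem 3.7 actually uses. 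The correct course is therefore to prove and invoke that relaxed Nussbaum-type property (or restate the lemma accordingly), rather than to keep searching for a divergence of the normalized average that the fixed amplitude $a_i$ rules out.
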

\begin{proof}
	$N_{i}(\chi_i)$ is an even function. Define $G_i(\chi_i)=\int_{0}^{\chi_i}N_{i}(\tau)d\tau$ as the integration of (\ref{OriNuss}) over the time interval $[0, \chi_i)$, which yields	
	\begin{small}
	\begin{align}\label{IntNuss}
		\begin{split}
			&G_i(\chi_i)= \\
			&\left \{
			\begin{array}{ll}
				\vdots, & \vdots\\
				{(-1)}^{n_{k,i}}a_{i}T_{i}{b_{i}}^{n_{k,i}}\sin[{\frac{1}{{b_{i}}^{n_{k,i}}T_{i}}({\chi_i}+\chi_{n_{k,i}}})],                                 & (-\chi_{n_{k,i}+1}, -\chi_{n_{k,i}}]\\
				\vdots, & \vdots\\
				-a_i\sin[{\frac{1}{b_{i}T_{i}}({\chi_i}+T_{i}\pi})],                    & (-\pi-b_{i}\pi, -\pi]\\
				a_{i}T_i\sin{\frac{1}{T_i}{\chi_i}},     & (-\pi, \pi)\\
				-a_{i}T_{i}b_{i}\sin[{\frac{1}{b_{i}T_{i}}({\chi_i}-T_{i}\pi})],                                 & [\pi, \pi+b_{i}\pi)\\
				\vdots, & \vdots\\
				{(-1)}^{n_{k,i}}a_{i}T_{i}{b_{i}}^{n_{k,i}}\sin[{\frac{1}{{b_{i}}^{n_{k,i}}T_{i}}({\chi_i}-\chi_{n_{k,i}}})],                                 & [\chi_{n_{k,i}}, \chi_{n_{k,i}+1})\\
				\vdots, & \vdots
			\end{array}
			\right.
		\end{split}
	\end{align}
\end{small}
It is obvious that $G_i(\chi_i)$ is an odd function. Over the interval $[\chi_{n_{k,i}-1}, \chi_{n_{k,i}})$, $G_i(\chi_i)$ will reach the maximum value $G_{i}^{max}(\chi_{n_{k,i}})=a_{i}T_{i}{b_{i}}^{n_{k}^{i}-1}$ if $n_{k,i}$ is odd. Similarly, $G_i(\chi_i)$ will reach the minimum value $G_{i}^{min}(\chi_{n_{k,i}})=-a_{i}T_{i}{b_{i}}^{n_{k}^{i}-1}$ if $n_{k,i}$ is even.

Since $b_i$ is greater than 1, it is clear that the following equations are satisfied when $\lim{n_{k,i}}=\infty$ as $\lim{\chi_{i}}=\infty$: 	
\begin{align}
	\nonumber	{\lim_{\chi_i \to \infty}}\sup\frac{1}{\chi_i}\int_{0}^{\chi_i}N_{i}(\tau)d\tau=\frac{a_{i}T_{i}{b_{i}}^{n_{k,i}-1}}{\chi_i}=\infty\\
	{\lim_{\chi_i \to \infty}}\inf\frac{1}{\chi_i}\int_{0}^{\chi_i}N_{i}(\tau)d\tau=-\frac{a_{i}T_{i}{b_{i}}^{n_{k,i}-1}}{\chi_i}=-\infty.
\end{align}
The proof is completed.
\end{proof}

\begin{figure}\label{G}
	\begin{center}
		\includegraphics[width=0.45\textwidth]{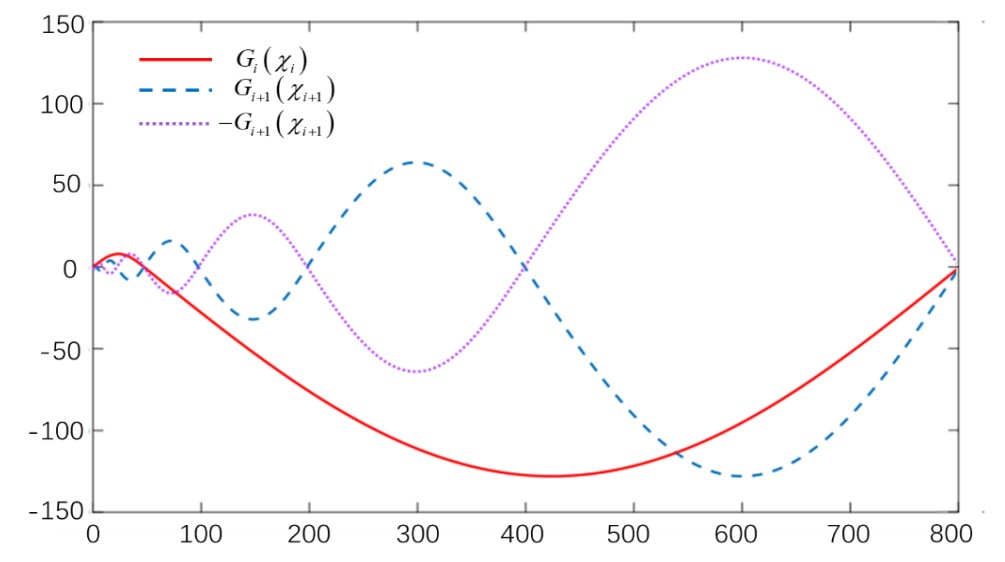}
		\makeatletter\def\@captype{figure}\makeatother
		\caption{$G_i(\chi_{i})$ and sign$(b_{i+1})G_{i+1}(\chi_{i+1})$.}
	\end{center}                   
\end{figure}	

\begin{remark}
	Although the existing Nussbaum functions have separately discussed the fully non-identical UCDs problem and control shock in \cite{HUANG2018auto,chenci2016Saturated,chenci2017tac}, it is still a blank to handle these two issues in the same Nussbaum functions. The novel function (\ref{OriNuss}) proposed in this paper avoids the control shock by transforming the standard amplitude-elongation functions into saturated time-elongation functions, which limits the upper bounds of the Nussbaum gains and the growth speed of control input.  At the same time, Nussbaum functions are assigned different frequencies to avoid the risk of mutual cancellation. 
\end{remark}

\begin{remark}
	In the novel proposed Nussbaum function, the half-period of the $\cos$ functions are employed as the piecewise original functions instead of the $\sin$ functions in \cite{chenci2016Saturated}, such that the forms of integrated functions $G_i(\chi_{i})$ are the half-period of the $\sin$ functions. This type of saturated Nussbaum function is beneficial to construct contradictions and prove Lemma \ref{0.5} and Theorem \ref{theorem}. Besides, the assumption of identical UCDs in \cite{chenci2016Saturated} is eliminated.
\end{remark}

\begin{remark}
	With the constant $M$ satisfying $M\geq4$, each Nussbaum function is endowed different frequencies which means that each segment of the $i$-th agent is divided into $M$ segments for the  ${(i+1)}$-th agent. Different from $G_{i}(k_i)=e^{\alpha k_{i}}\sin(M^{i}k_{i}-\epsilon_{i})$ in \cite{HUANG2018auto}, the Nussbaum functions designed as a time-elongation form in this paper can avoid the fast increase of Nussbaum gain and smooth the control input in the early stage of parameter approximation. 
\end{remark}

With the parameters $a_i$, $b_i$, and $T_i$ satisfying the propotion relationship between $G_i$ and $G_{i+1}$ in (\ref{IntNuss}), i.e., max$|G_{i}(\chi_{i})|=$max$|G_{i+1}(\chi_{i+1})|$,  $\forall \chi_{i}\in[\chi_{n_{k,i}-1}, \chi_{n_{k,i}}), \chi_{i+1}\in[\chi_{M\cdot n_{k,i}-1}, \chi_{M\cdot n_{k,i}})$, where $\chi_{M\cdot n_{k,i}}=\chi_{n_{k,i+1}}$, we have the following Lemmas and Corollary. For the convenience of parameter selection, the following discussion in this paper is based on the chosen parameter $M=4$. $M>4$ does not affect the correctness of the following Lemmas and Corollary.

\begin{lemma}\label{0.5}
	With $M=4$, there exist intervals $[\chi_{n_{k,i+1}}^{in}, \chi_{n_{k,i+1}}^{out}],  k=1,\ldots,\infty$, on which  $\forall \chi_{i}\in[\chi_{n_{k,i+1}}^{in}, \chi_{n_{k,i+1}}^{out}]$, $sign({\varrho}_i)G_i({\chi_i}) \le0.5G_{i+1}^{min}(\chi_{i+1})<0$, where $\chi_{n_{k,i+1}}^{in}$ and $\chi_{n_{k,i+1}}^{out}$ represent the left end point and right end point of the interval over the $k$-th segment on the Nussbaum function of $(i+1)$-th agent. 
\end{lemma}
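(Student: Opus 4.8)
The plan is to work directly with the explicit piecewise form of $G_i$ in (\ref{IntNuss}), using that $G_i$ is a chain of half-sine ``bumps'': counting the central one on $(-\chi_{1,i},\chi_{1,i})$ as the first, the $j$-th bump occupies $[\chi_{j-1,i},\chi_{j,i})$ (with $\chi_{0,i}:=0$), has amplitude $a_iT_ib_i^{\,j-1}$ and sign $(-1)^{j-1}$, and there equals $(-1)^{j-1}a_iT_ib_i^{\,j-1}\sin\!\big(\pi(\chi_i-\chi_{j-1,i})/(\chi_{j,i}-\chi_{j-1,i})\big)$ (these facts, including the extreme values, are immediate from the derivation of (\ref{IntNuss})). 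First I would turn the parameter relations (with $M=4$) into two structural facts, obtained by telescoping the geometric sums in $\chi_{n_{k,i}}=T_i\sum_{n=1}^{n_{k,i}}b_i^{\,n-1}\pi$: (i) the break-points nest, $\chi_{m,i}=\chi_{4m,i+1}$, so the $m$-th bump of $G_i$ is covered exactly by the bumps $4m-3,4m-2,4m-1,4m$ of $G_{i+1}$; and (ii) $a_iT_ib_i^{\,m-1}=a_{i+1}T_{i+1}b_{i+1}^{\,4m-1}$, i.e.\ the amplitude of the $m$-th bump of $G_i$ equals that of the last — hence largest, and since $(-1)^{4m-1}=-1$ negative — sub-bump $4m$ of $G_{i+1}$; denote this common amplitude by $A_m$, so $G_{i+1}^{min}$ over the $4m$-th segment of $G_{i+1}$ equals $-A_m$.

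Next I would locate, inside the $m$-th bump of $G_i$, the sub-range matching sub-bump $4m$ of $G_{i+1}$, namely $[\chi_{4m-1,i+1},\chi_{4m,i+1})$. On it the normalized argument of the $G_i$-sine runs over $[f\pi,\pi)$ with $f=\tfrac{1+b_{i+1}+b_{i+1}^2}{1+b_{i+1}+b_{i+1}^2+b_{i+1}^3}$, and since $b_{i+1}>1$ one checks $f<3/4<5/6$. Because $\sin\theta\ge\tfrac12$ exactly for $\theta\in[\pi/6,5\pi/6]$, the set $\{\theta\in[f\pi,\pi):\sin\theta\ge\tfrac12\}=[\max(f\pi,\pi/6),5\pi/6]$ is a genuine nonempty interval; pulling it back yields a subinterval $[\chi_{n_{k,i+1}}^{in},\chi_{n_{k,i+1}}^{out}]$ of the $4m$-th segment of $G_{i+1}$ on which $|G_i(\chi_i)|\ge A_m/2$ while $G_i$ keeps the constant sign $(-1)^{m-1}$ of its $m$-th bump.

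It then remains to settle signs: I would pick the parity of $m$ according to $\varrho_i$, taking $m$ even when $\varrho_i>0$ and $m$ odd when $\varrho_i<0$, so that $\mathrm{sign}(\varrho_i)(-1)^{m-1}=-1$ on the subinterval. There $\mathrm{sign}(\varrho_i)G_i(\chi_i)=-|G_i(\chi_i)|\le -A_m/2=0.5\,G_{i+1}^{min}(\chi_{i+1})<0$, which is the assertion; letting $m$ range over all admissible integers of the chosen parity gives the countable family of intervals (the $k$-th one sitting on segment $n_{k,i+1}=4m_k$ of $G_{i+1}$).

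I expect the quantitative overlap in the second paragraph to be the only delicate step: one must be sure that the window where the slowly-varying $G_i$ is still at least half its peak actually meets the last and largest sub-bump of $G_{i+1}$. This is exactly where $M\ge4$ enters — it forces $f<3/4$, safely below the threshold $5/6$ imposed by the constant $0.5$ in the statement; enlarging $M$ only decreases $f$ (as long as $b_{i+1}$ is kept bounded away from $1$), so it does not affect the conclusion. Facts (i)–(ii), by contrast, are a routine resumming of the geometric progressions defined by $b_i=b_{i+1}^{4}$, $T_i=(1+b_{i+1}+b_{i+1}^2+b_{i+1}^3)T_{i+1}$, $a_i=\tfrac{b_{i+1}^{3}}{1+b_{i+1}+b_{i+1}^2+b_{i+1}^3}a_{i+1}$.
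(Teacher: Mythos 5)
Your construction does prove the inequality exactly as the lemma is worded, and the bookkeeping behind it is sound: with $M=4$ the break points nest ($\chi_{m,i}=\chi_{4m,i+1}$), the amplitude $a_iT_ib_i^{m-1}$ of the $m$-th bump of $G_i$ equals the amplitude $a_{i+1}T_{i+1}b_{i+1}^{4m-1}$ of the last (negative, largest) sub-bump $4m$ of $G_{i+1}$, on that sub-range the normalized argument of $G_i$ runs over $[f\pi,\pi)$ with $f=\tfrac{1+b_{i+1}+b_{i+1}^2}{1+b_{i+1}+b_{i+1}^2+b_{i+1}^3}<3/4$ for $b_{i+1}>1$, so the $\sin\ge\tfrac12$ window is a nonempty subinterval of that sub-bump, and choosing the parity of $m$ from $\mathrm{sign}(\varrho_i)$ gives $\mathrm{sign}(\varrho_i)G_i(\chi_i)\le 0.5G_{i+1}^{min}(\chi_{i+1})<0$ there. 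This is a genuinely different route from Appendix A: the paper fixes the bump of agent $i$ on which $\mathrm{sign}(\varrho_i)G_i<0$ and then selects, among its four sub-bumps, an \emph{interior} one according to whether the two control directions agree (segment $8k-2$) or are opposite (segment $8k-1$), checking with the $\arcsin\tfrac{1}{2b_{i+1}^2}$ (resp.\ $\arcsin\tfrac{1}{2b_{i+1}}$) margins that agent $i$'s deep region covers the middle two-thirds of that sub-bump.

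The difference is not cosmetic, and here is the shortfall. The paper's choice buys the simultaneous bound $\mathrm{sign}(\varrho_{i+1})G_{i+1}(\chi_{i+1})\le 0.5G_{i+1}^{min}(\chi_{i+1})$ on the \emph{same} interval; that joint property is what Remark 5 records, what Corollary 3.5 chains over $L$ agents, and what the contradiction step of Theorem 3.8 actually uses (there the interval is defined by agent $L$'s own sine being $\le -0.5$ after multiplication by $\mathrm{sign}(\varrho_L)$, and all slower agents must satisfy their bound on that interval). Your interval always sits on the last sub-bump $4m$, where $G_{i+1}<0$; if $\varrho_{i+1}<0$ then $\mathrm{sign}(\varrho_{i+1})G_{i+1}>0$ there, so the companion inequality fails, and no parity choice of $m$ can repair this since the parity is already spent accommodating $\mathrm{sign}(\varrho_i)$. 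So while the lemma as literally stated is established, your proof could not replace the paper's without undermining the corollary and the main theorem; the fix is precisely the paper's device of letting the sub-bump index (not merely the parent bump's parity) depend on $\mathrm{sign}(\varrho_{i+1})$, using interior sub-bumps rather than the last one. (Your aside that enlarging $M$ only helps also needs care for the last-sub-bump choice: as $b_{i+1}\to1^{+}$ one has $f\to (M-1)/M$, which exceeds $5/6$ once $M\ge 7$; this is immaterial for the stated case $M=4$.)
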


\begin{proof}
	The details are presented in \textbf{Appendix A}.
\end{proof}

\begin{remark}
	With Lemma \ref{0.5}, we can always find an interval on the Nussbaum function of $(i+1)$th agent, on which $sign(\varrho_i)G_{i}$ and  $sign(\varrho_{i+1})G_{i+1}$ will be both smaller than $0.5G_{i+1}^{min}(\chi_{i+1})$ whether they are in the same direction or not. This provides an upper bound for the minimum value of the Nussbaum functions' integrations for all agents with arbitrary directions in the following proof of Theorem \ref{theorem}.

\end{remark}

\begin{corollary}\label{coro}
	As a simple extension of Lemma \ref{0.5}, when there are $L$ agents, there must be an interval $[\chi_{n_{k,L}}^{in}, \chi_{n_{k,L}}^{out}],  k=1,\ldots,\infty$ such that $\forall \chi_{i} \in[\chi_{n_{k,L}}^{in}, \chi_{n_{k,L}}^{out}]$, $sign({\varrho}_i)G_i({\chi_i}) \le sign({\varrho}_{L})G_{L}({\chi_{L}})\le0.5G_{L}^{min}(\chi_{L})<0$, where $\chi_{n_{k,L}}^{in}$ and $\chi_{n_{k,L}}^{out}$ represent the inner point and outer point of the interval over the $k$-th segment on the function of $L$-th agent, i.e., the function with the fastest frequency in a set of agents. The proof is omitted here.
\end{corollary}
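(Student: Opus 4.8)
The plan is to induct on the number of agents $L$, using Lemma~\ref{0.5} both as the base case $L=2$ (rename the pair $(i,i+1)$ as $(1,2)$) and as the engine of the inductive step, and to lean throughout on the strict nesting of frequencies: the relations among $a_i,b_i,T_i$ together with $\chi_{M\cdot n_{k,i}}=\chi_{n_{k,i+1}}$ force every segment of $G_i$ to be cut into exactly $M=4$ segments of $G_{i+1}$, so $G_L$ has the finest partition, and, on each matched block, all the integrals share one common maximal amplitude $\Gamma$ and hence a common block-minimum $-\Gamma$, i.e. $\tfrac12 G_{i+1}^{min}=\tfrac12 G_L^{min}=-\tfrac12\Gamma$ for the corresponding blocks. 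This last observation is what makes ``extension'' cheap: applying Lemma~\ref{0.5} to the pair $(i,i+1)$ already delivers a window on which $sign(\varrho_i)G_i$ and $sign(\varrho_{i+1})G_{i+1}$ are both $\le \tfrac12 G_{i+1}^{min}=\tfrac12 G_L^{min}$, in the same numerical bound the corollary demands.

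For the inductive step I would carry forward, for the first $L-1$ agents, a window $W=[\chi_{n_{k,L-1}}^{in},\chi_{n_{k,L-1}}^{out}]$ inside one segment of $G_{L-1}$ on which $sign(\varrho_i)G_i(\chi_i)\le sign(\varrho_{L-1})G_{L-1}(\chi_{L-1})\le\tfrac12 G_{L-1}^{min}<0$ for all $i\le L-1$, and I would strengthen the hypothesis so that $W$ is known to lie in a region where $|G_{L-1}|$ is already close to $\Gamma$ and is long enough to contain a full, sufficiently large segment of the next faster function $G_L$. The $(-1)^{n_{k,i}}$ factors in (\ref{IntNuss}) make the signs of $G_L$ alternate across its four sub-segments of that host segment, so at least one such sub-segment $S_L\subset W$ has $sign(\varrho_L)G_L<0$; entering $S_L$ from the endpoint nearest where the coarser integrals peak, $|G_L|$ climbs from $0$ toward $\Gamma$ while the $|G_i|$, $i<L$, stay near $\Gamma$, so there is a sub-window $W'\subset S_L$ on which $|G_L|$ first lies in the band $[\tfrac12\Gamma,\Gamma]$ while $|G_i|\ge|G_L|$ for all $i<L$. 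On $W'$ every quantity is negative, so $sign(\varrho_i)G_i=-|G_i|\le-|G_L|=sign(\varrho_L)G_L\le-\tfrac12\Gamma=\tfrac12 G_L^{min}<0$; taking $[\chi_{n_{k,L}}^{in},\chi_{n_{k,L}}^{out}]:=W'$ (for each block $k$) yields the corollary.

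The main obstacle is that the chain $sign(\varrho_i)G_i\le sign(\varrho_{i+1})G_{i+1}$ must hold for all $i$ on a single common window: because every $|G_i|$ tops out at the same $\Gamma$, these inequalities are tight, so one cannot simply run Lemma~\ref{0.5} on the adjacent pairs $(1,2),\dots,(L-1,L)$ and intersect the resulting windows — they must be nested coherently so that $|G_i|$ decreases in $i$ while still exceeding $\tfrac12\Gamma$ at $i=L$. That is precisely why the induction hypothesis must also record the location of $W$ within its host segment (near a point where the coarser half-sines are close to their peak) and why $M\ge 4$, not merely $M\ge 2$, is imposed: enough sub-intervals of the faster function are needed to find a correctly-signed one that is also large and sits away from the extremum of the slower one. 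With the hypothesis stated in this strengthened form, the remaining check is the routine half-sine estimate just sketched, which is why ``the proof is omitted here.''
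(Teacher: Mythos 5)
There is a genuine gap, and it sits exactly where your argument leans hardest: the ``common maximal amplitude $\Gamma$'' premise. In the paper's construction the equality $\max|G_i|=\max|G_{i+1}|$ holds only between a whole segment of agent $i$ and the \emph{last} of the $M$ sub-segments of agent $i+1$ contained in it; the other sub-segments have amplitudes smaller by factors $b_{i+1},b_{i+1}^2,b_{i+1}^3$. The window delivered by Lemma \ref{0.5} (Appendix A) deliberately lives in one of those lower-amplitude sub-segments — the second or third, whichever has the sign forced by $sign(\varrho_{i+1})$ and the parity — so on it $G_{i+1}^{min}$ is \emph{not} equal to $G_L^{min}$ up to a common $\Gamma$, and the bound the corollary asserts is relative to agent $L$'s local segment minimum $G_L^{min}(\chi_L)$, not to a shared peak. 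Consequently your inductive target ``find $W'\subset S_L$ on which $|G_L|\in[\tfrac12\Gamma,\Gamma]$'' is generally unattainable: when $sign(\varrho_L)$ forces the correctly-signed sub-segment to be a low-amplitude one, $|G_L|$ never comes near the coarser agents' peak, and the chain must instead be proved against the local amplitude $a_LT_Lb_L^{n_{k,L}-1}$. Your claim that the pairwise lemma already gives the bound ``in the same numerical bound the corollary demands'' is therefore false as stated, and the step where you conclude $sign(\varrho_i)G_i=-|G_i|\le-|G_L|$ needs a quantitative lower bound on $|G_i|$ over the chosen sub-window (the analogue of the $\arcsin\frac{1}{2b_{i+1}^{2}}$ and $\arcsin\frac{1}{2b_{i+1}}$ margins in Appendix A, now with ratios $\mathrm{amp}_L/\mathrm{amp}_i$ that shrink geometrically in $L-i$), not the qualitative remark that the coarser integrals ``stay near $\Gamma$.''

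A second, related gap is the containment assumption in your inductive step. The window $W$ inherited from the previous level excludes one-sixth-length tails of its host sub-segment, and with $b\ge 2$ the excluded left tail can swallow an entire correctly-signed sub-segment of $G_L$ while the right tail truncates another; so ``at least one full sub-segment $S_L\subset W$ with $sign(\varrho_L)G_L<0$'' is exactly the assertion that needs proof, and it can fail if you literally reuse the previous window. The workable route — and the natural reading of the paper's omitted proof — is not to nest windows but to redo the Lemma \ref{0.5} computation directly on agent $L$'s partition: pick, inside a top-level segment on which every $sign(\varrho_i)G_i$, $i<L$, is negative, the correctly-signed sub-segment of agent $L$ (its middle portion cut at the level set $|\sin|=\tfrac12$), and then verify for each coarser agent $i$ that its half-sine, evaluated at the endpoints of that sub-window, already exceeds $\tfrac12 a_LT_Lb_L^{n_{k,L}-1}$ in magnitude and dominates $|G_L|$ pointwise. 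That verification is the content of the corollary; with your amplitude bookkeeping corrected it goes through, but as written the proposal both targets the wrong bound and assumes the window geometry it is supposed to establish.
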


\begin{lemma}\label{fangsuo}
	For $\chi_{i} \le \chi_{n_{k,L}}^{in}$, the inequality  $G_{i}(\chi_i)\le G_{L}^{max}(\chi_{n_{k,L}-1}),  i=1,\ldots,L$ holds.
\end{lemma}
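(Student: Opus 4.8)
The plan is to read the ``bump'' geometry of the integrals $G_i$ off the explicit formula (\ref{IntNuss}) and then transport every estimate into the scale of the fastest agent $L$ by means of the matched-amplitude structure hard-wired into $a_i,b_i,T_i$. Write $\chi^{(i)}_p:=T_i\pi\sum_{n=1}^{p}b_i^{\,n-1}$ for the $p$-th segment endpoint of agent $i$ (the $\chi_{n_{k,i}}$ of (\ref{OriNuss})). By the computation behind (\ref{IntNuss}), on its $p$-th segment $G_i$ is a single signed half-sine of amplitude $a_iT_ib_i^{\,p-1}$ and width $T_i\pi b_i^{\,p-1}$, with extremum at the midpoint and sign alternating in $p$ (the central segment splitting into two such half-sines, and the case $\chi_i\le0$ following by oddness of $G_i$). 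The three parameter relations give, by the same telescoping identity already used for the proportion $\max|G_i|=\max|G_{i+1}|$, the two facts I use throughout: $a_iT_ib_i^{\,p-1}=a_LT_Lb_L^{\,M^{L-i}p-1}$ and $\chi^{(i)}_p=\chi^{(L)}_{M^{L-i}p}$, hence $T_L/T_i=(b_L-1)/(b_L^{\,M^{L-i}}-1)$ and the $p$-th bump of agent $i$ is exactly the bumps $M^{L-i}(p-1)+1,\dots,M^{L-i}p$ of agent $L$ concatenated. Spelling out the right-hand side as $G_L^{max}(\chi_{n_{k,L}-1})=a_LT_Lb_L^{\,n_{k,L}-2}$, I must show $G_i(\chi_i)\le a_LT_Lb_L^{\,n_{k,L}-2}$ for all $\chi_i\le\chi_{n_{k,L}}^{in}$ and all $i\le L$.

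Fix $i\le L$, let $p_i$ be the bump of agent $i$ containing $\chi_{n_{k,L}}^{in}$, and split the range into $\chi_i\le\chi^{(i)}_{p_i-1}$ and $\chi_i\in[\chi^{(i)}_{p_i-1},\chi_{n_{k,L}}^{in}]$. On the first part $G_i(\chi_i)\le a_iT_ib_i^{\,p_i-2}=a_LT_Lb_L^{\,M^{L-i}(p_i-1)-1}$ (the largest peak among completed bumps); and $\chi^{(i)}_{p_i-1}=\chi^{(L)}_{M^{L-i}(p_i-1)}\le\chi_{n_{k,L}}^{in}<\chi^{(L)}_{n_{k,L}}$ forces $M^{L-i}(p_i-1)\le n_{k,L}-1$, so this is $\le a_LT_Lb_L^{\,n_{k,L}-2}$ — a step that uses nothing beyond the fact that $\chi_{n_{k,L}}^{in}$ lies in agent $L$'s bump $n_{k,L}$. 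On the second part, if bump $p_i$ is negative there is nothing to prove, and otherwise $G_i(\chi_i)\le a_iT_ib_i^{\,p_i-1}\sin(\pi f_i)\le a_iT_ib_i^{\,p_i-1}\pi f_i$ with $f_i$ the fraction of bump $p_i$ already traversed at $\chi_{n_{k,L}}^{in}$. Here I would feed in the placement of the admissible interval established in the proof of Lemma~\ref{0.5} / Corollary~\ref{coro} (Appendix~A): the $k$-th interval is positioned so that agent $L$'s bump $n_{k,L}$ is the first sub-bump of the current bump of every slower agent $L-1,\dots,1$, and so that $\chi_{n_{k,L}}^{in}$ lies a $1/6$-fraction into bump $n_{k,L}$ (the factor $\tfrac12$ in Lemma~\ref{0.5}). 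Then $\chi^{(i)}_{p_i-1}=\chi^{(L)}_{n_{k,L}-1}$, so $f_i=\tfrac16\,T_L/T_i=\tfrac16\,(b_L-1)/(b_L^{\,M^{L-i}}-1)$, while $M^{L-i}p_i=(n_{k,L}-1)+M^{L-i}$ gives $a_iT_ib_i^{\,p_i-1}=a_LT_Lb_L^{\,n_{k,L}-2}\,b_L^{\,M^{L-i}}$; combining these with the elementary bound $b_L^{\,N}(b_L-1)/(b_L^{\,N}-1)\le b_L$ (valid for $N\ge1$, $b_L>1$) yields $G_i(\chi_i)\le\tfrac{\pi}{6}\,b_L\,a_LT_Lb_L^{\,n_{k,L}-2}\le a_LT_Lb_L^{\,n_{k,L}-2}$, the last inequality holding under a mild absolute upper bound on $b_L$ (on the order of $b_L\le 6/\pi$, consistent with $b_i>1$). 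The case $i=L$ is identical with $f_L=1/6$. Taking the two ranges together and ranging over $i=1,\dots,L$ completes the proof.

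The one delicate step is the ``current-bump'' estimate: the naive envelope bound $G_i(\chi_i)\le a_iT_ib_i^{\,p_i-1}=a_LT_Lb_L^{\,M^{L-i}p_i-1}\ge a_LT_Lb_L^{\,n_{k,L}-1}$ already exceeds $G_L^{max}(\chi_{n_{k,L}-1})$, so the inequality survives only because $\chi_{n_{k,L}}^{in}$ is placed in the very first, still-rising, numerically tiny part of every slower agent's current bump. Extracting exactly that placement from the construction of the intervals $[\chi_{n_{k,L}}^{in},\chi_{n_{k,L}}^{out}]$ in Appendix~A, and checking that the geometric parameter relations (with $M\ge4$) are compatible with the resulting numeric inequality, is where the real work lies; the ``before-the-current-bump'' estimate, by contrast, is pure matched-amplitude bookkeeping.
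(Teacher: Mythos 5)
Your ``before-the-current-bump'' half is exactly the paper's entire proof: bound the running maximum of $G_i$ for $\chi_i\le\chi_{n_{k,L}}^{in}$ by the last completed peak $G_i^{max}(\chi_{n_{k*,i}-1})$, convert it through the matched-amplitude identity $G_i^{max}(\chi_{n_{k*,i}-1})=G_{i+1}^{max}(\chi_{M(n_{k*,i}-1)})$, note $M(n_{k*,i}-1)\le n_{k,i+1}-1$, and telescope up to agent $L$. That bookkeeping (including the identities $a_iT_ib_i^{p-1}=a_LT_Lb_L^{M^{L-i}p-1}$ and $\chi^{(i)}_p=\chi^{(L)}_{M^{L-i}p}$) is correct and matches the paper. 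The paper never opens an ``inside the current bump'' case; it implicitly works in the configuration where, up to $\chi_{n_{k,L}}^{in}$, the current bump contributes nothing beyond the previous peak.

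The genuine gap is in the second half you added. First, the placement you import from Appendix~A is not what Appendix~A establishes: there, Case~1 puts the interval in the \emph{first} sub-bump of the slower agent's current bump but Case~2 puts it in the \emph{second} (indices $8k-3$ versus $8k-2$), the left endpoint in the $i$-scale is located at $\arcsin\frac{1}{2b_{i+1}^{2}}$ or $\arcsin\frac{1}{2b_{i+1}}$ rather than a uniform $1/6$ fraction, and for $L>2$ agents with mixed signs the position relative to each slower agent varies further; so the identity $\chi^{(i)}_{p_i-1}=\chi^{(L)}_{n_{k,L}-1}$ and your formula for $f_i$ are unsupported. Second, and decisively, your closing step needs $\tfrac{\pi}{6}b_L\le 1$, i.e.\ $b_L\le 6/\pi<2$, which can never hold where the lemma is used: Theorem~\ref{theorem} requires $b_N>2(N-1)\varrho_{max}/\varrho_{min}\ge 2$ and $b_L=b_N^{\,M^{N-L}}\ge b_N$ (the simulation even uses $b_3=3$). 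Indeed no placement argument can rescue the unsigned estimate in the case you are trying to cover: if $\varrho_i<0$, Lemma~\ref{0.5} forces $G_i(\chi_{n_{k,L}}^{in})\ge 0.5\,|G_L^{min}(\chi_{n_{k,L}})|=0.5\,b_L\,G_L^{max}(\chi_{n_{k,L}-1})$, which already exceeds the claimed bound whenever $b_L>2$. So the ``current-bump'' patch fails in the admissible parameter regime; the paper's argument is only the last-completed-peak estimate, and your proof should either stop there (as the paper does, treating the current bump as non-contributing) or handle the signed quantities rather than $G_i$ itself.
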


\begin{proof}
	
	First consider the agents $i$ and $i+1$, suppose that the interval $[\chi_{n_{k,i+1}}^{in}, \chi_{n_{k,i+1}}^{out}]\in [\chi_{n_{k*,i}-1}, \chi_{n_{k*,i}})$, it is clear that $M(n_{k*,i}-1)+1\le n_{k,i+1}-1$. For $\chi_{i} \le \chi_{n_{k,i+1}}^{in}$, with the propotion relationship between $G_{i}$ and $G_{i+1}$, we have $G_{i}(\chi_{i})\le G_{i}^{max}(\chi_{n_{k*,i}-1})=G_{i+1}^{max}(\chi_{M({n_{k*,i}-1})})\le G_{i+1}^{max}(\chi_{n_{k,i+1}-1})$. Similarly, if there exist $L$ agents, the above inequality also holds, i.e., $G_{i}(\chi_i)\le G_{L}^{max}(\chi_{n_{k,L}-1}), i=1,\ldots,L$.

	
\end{proof}

\begin{theorem}\label{theorem}
	Let all of $V_{i}(t)$ and $\chi_{i}(t)$ for $i=1, \ldots, N$ be smooth functions with the initial $\chi_{i}(0)$ being bounded and $V_{i}(t)$ being non-negative on the time interval $[0, t_f)$. The saturated Nussbaum function $N_{i}(\chi_{i})$ is constructed as (\ref{OriNuss}) with constants $a_N$ and $b_{N}$ satisfying
	\begin{align}
		\nonumber  			a_N>\max\{\frac{2\bar{\eta}\pi}{\varrho_{min}(b_N-1)}, -\frac{\bar{\eta}\pi\xi}{\kappa}\},  \quad b_{N}>2(N-1)\frac{\varrho_{max}}{\varrho_{min}},
	\end{align}	
	where $\kappa=\frac{(N-1)\varrho_{max}}{b_N}-\frac{\varrho_{min}}{2}$ and $\xi=\frac{N}{b_N-1}+\frac{N+4}{6}$.

	If the following inequality holds:	
	\begin{equation}
		V(t)\le\sum_{i=1}^{N}\int_{0}^{t}{\varrho_{i}N_{i}(\chi_{i})\dot{\chi_{i}}(\tau)d\tau}+\sum_{i=1}^{N}\int_{0}^{t}\eta_{i}\dot{\chi_{i}}(\tau)d\tau+c, \forall t \in [0,t_f)
	\end{equation}
	where $\eta_{i}$ and $c$ are constants with $\eta_{i}>0$, $\varrho_{i}$ could be any signs with $|\varrho_{i}|\in [\varrho_{min}, \varrho_{max}], i=1,2,\ldots,N$, then, $V(t), \chi_{i}(t)$, and $\sum_{i=1}^{N}\int_{0}^{t}{\varrho_{i}N_{i}(\chi_{i})\dot{\chi_{i}}(\tau)d\tau}$ are all bounded on $[0,t_f)$.
\end{theorem}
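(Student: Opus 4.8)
The plan is to argue by contradiction, following the classical Nussbaum-function proof framework of \cite{renwei2014tac} but replacing the single exponential-growth contradiction with the proportional-growth machinery of Lemmas~\ref{0.5}--\ref{fangsuo} and Corollary~\ref{coro}. First I would suppose that some $\chi_i(t)$ is unbounded on $[0,t_f)$. Since each $\chi_i$ is smooth and all the Nussbaum functions share (up to frequency rescaling) the same envelope, I would show that the $\chi_i$'s all escape to $+\infty$ together, or at least that $\max_i|\chi_i(t)|\to\infty$ along a sequence $t_n\uparrow t_f$; relabel so that agent $N$ has the fastest frequency and its argument $\chi_N$ drives the others. The key observation is that on the escape sequence we can pass to a point lying in one of the special windows $[\chi^{in}_{n_{k,N}},\chi^{out}_{n_{k,N}}]$ furnished by Corollary~\ref{coro}, on which simultaneously $sign(\varrho_i)G_i(\chi_i)\le sign(\varrho_N)G_N(\chi_N)\le 0.5\,G_N^{min}(\chi_N)<0$ for every $i$, regardless of the (arbitrary) signs of the $\varrho_i$.

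The second step is the bookkeeping of the right-hand side of the inequality. Using $N_i=G_i'$ and integration by parts (or just the definition of $G_i$ as the antiderivative), $\int_0^t \varrho_i N_i(\chi_i)\dot\chi_i\,d\tau = \varrho_i\bigl(G_i(\chi_i(t))-G_i(\chi_i(0))\bigr)$ up to the sign-of-$\dot\chi_i$ subtlety, which I would handle exactly as in \cite{renwei2014tac} by noting $\chi_i$ is monotone on the relevant stretches or by splitting into monotone pieces; similarly $\int_0^t\eta_i\dot\chi_i\,d\tau=\eta_i(\chi_i(t)-\chi_i(0))$. So along the escape sequence,
\begin{equation}
0\le V(t_n)\le \sum_{i=1}^{N}\varrho_i G_i(\chi_i(t_n)) + \sum_{i=1}^{N}\eta_i\chi_i(t_n) + c',
\end{equation}
with $c'$ absorbing the bounded initial terms. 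The dominant negative contribution is $\varrho_N G_N(\chi_N(t_n))\le 0.5\,G_N^{min}(\chi_N(t_n))$, which has magnitude of order $a_N T_N b_N^{\,n_{k,N}-1}$, while by Lemma~\ref{fangsuo} each of the other $N-1$ terms $\varrho_i G_i(\chi_i(t_n))$ is at most $\varrho_{max}G_N^{max}(\chi_{n_{k,N}-1})$, i.e.\ a factor $\approx 1/b_N$ smaller, and the linear term $\sum\eta_i\chi_i(t_n)$ grows only linearly in $\chi_N$ whereas $|G_N|$ grows proportionally to $\chi_N$ as well — here I would use that $\chi_N$ over the $k$-th segment is $T_N\sum_{n=1}^{n_{k,N}}b_N^{n-1}\pi\approx T_N b_N^{\,n_{k,N}-1}\pi/(b_N-1)$, so the linear term is comparable to $\bar\eta\pi\xi$ times the envelope with $\xi$ exactly the constant in the theorem statement. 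Choosing $a_N$ and $b_N$ as prescribed makes the coefficient $\kappa=(N-1)\varrho_{max}/b_N-\varrho_{min}/2$ negative and the total right-hand side strictly negative for $n$ large, contradicting $V(t_n)\ge 0$. Hence all $\chi_i$ are bounded on $[0,t_f)$.

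Once boundedness of the $\chi_i$ is in hand, boundedness of $V$ and of $\sum_i\int_0^t\varrho_i N_i(\chi_i)\dot\chi_i\,d\tau$ follows immediately: $G_i$ and the linear terms are continuous functions of the bounded arguments $\chi_i(t)$, so the whole right-hand side of the hypothesised inequality is bounded, hence $V(t)$ is bounded above (and $\ge 0$), and rearranging shows the Nussbaum integral term is bounded as well.

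The main obstacle I anticipate is the joint-escape / window-alignment argument: making rigorous that an unbounded trajectory must, infinitely often, place \emph{all} the arguments $\chi_i(t_n)$ simultaneously inside a common favourable window. This requires that the different frequencies (the nesting of each segment of agent $i$ into $M$ segments of agent $i{+}1$) be exploited carefully — essentially the content of Lemma~\ref{0.5} and Corollary~\ref{coro} — together with a monotonicity/continuity argument on the $\chi_i$ to rule out pathological oscillation; the constant $M\ge 4$ is what guarantees the windows are wide enough to survive this. The rest is the quantitative comparison of the envelope growth rate $b_N^{\,n_{k,N}-1}$ against the $O(1/b_N)$ cross-terms and the $O(1)$-weighted linear term, which is exactly what the stated bounds on $a_N$, $b_N$, and the auxiliary constants $\kappa$, $\xi$ are engineered to close.
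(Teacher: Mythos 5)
Your proposal follows essentially the same route as the paper's proof: a contradiction argument after the change of variables to $G_i$, the favourable windows of Lemma~\ref{0.5}/Corollary~\ref{coro} for the dominant (negative) term, Lemma~\ref{fangsuo} for the remaining cross terms that are a factor $1/b_N$ smaller, and the same envelope-versus-linear-term balance that produces exactly $\kappa$, $\xi$ and the stated conditions on $a_N$, $b_N$. The one clarification worth making is that the "main obstacle" you anticipate (placing all $\chi_i$ simultaneously in a common window, with the fastest-frequency agent's $\chi_N$ as the driver) is not actually needed: the paper defines the window on the grid of the fastest-frequency \emph{unbounded} agent and only requires the maximal argument $\chi_{i*}$ to enter it, the remaining unbounded arguments being $\le \chi_{n_{k,L}}^{in}$ and hence controlled by Lemma~\ref{fangsuo} — which is precisely the estimate you already carry out in your second paragraph.
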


\begin{proof}
	
	From (6) we obtain
	\begin{align}
		\nonumber	V(t)\le&\sum_{i=1}^{N}\int_{0}^{t}{\varrho_{i}N_{i}(\chi_{i})\dot{\chi_{i}}(\tau)d\tau}+\sum_{i=1}^{N}\int_{0}^{t}\eta_{i}\dot{\chi_{i}}(\tau)d\tau+c\\
		=&\sum_{i=1}^{N}\int_{0}^{\chi_{i}(t)}{\varrho_{i}N_{i}(\sigma)d\sigma}+\sum_{i=1}^{N}\eta_{i}\chi_{i}(t)+\bar{c}\label{Vchange}
	\end{align}
	where $\bar{c}=\sum_{i=1}^{N}\int_{\chi_{i}(0)}^{0}{\varrho_{i}N_{i}(\sigma)d\sigma}-\sum_{i=1}^{N}\eta_{i}\chi_{i}(0)+c$ is a constant. Then, similar to \cite{renwei2014tac}, we will discuss the boundness of all $\chi_{i}(t)$ by seeking a contradiction. Assume that $\chi_{i}(t), i=1, \ldots, L$ are unbounded, while $\chi_{i}(t), i=L+1, \ldots, N$ are bounded, $1\le L\le N$.  Considering the $N_{i}(\chi_{i})$ is odd, we investigate the situation of $\chi_{i}>0$. The proof that $\chi_{i}<0$ is similar and omitted here. Then the inequation (\ref{Vchange}) can be written as 	
	\begin{align}
		\nonumber	V(t)\le&\sum_{i=1}^{N}\int_{0}^{\chi_{i}(t)}{\varrho_{i}N_{i}(\sigma)d\sigma}+\sum_{i=1}^{N}\eta_{i}\chi_{i}(t)+\bar{c}\\
		\nonumber		=&\sum_{i=1}^{L}\int_{0}^{\chi_{i}(t)}{\varrho_{i}N_{i}(\sigma)d\sigma}+\sum_{i=1}^{L}\eta_{i}\chi_{i}(t)\\
		\nonumber		&+\sum_{i=L+1}^{N}\int_{0}^{\chi_{i}(t)}{\varrho_{i}N_{i}(\sigma)d\sigma}+\sum_{i=L+1}^{N}\eta_{i}\chi_{i}(t)+\bar{c}\\
		=&\sum_{i=1}^{L}\int_{0}^{\chi_{i}(t)}{\varrho_{i}N_{i}(\sigma)d\sigma}+\sum_{i=1}^{L}\eta_{i}\chi_{i}(t)+\tilde{c},\label{Vsep}
	\end{align}
	where $\tilde{c}=\sum_{i=L+1}^{N}\int_{0}^{\chi_{i}(t)}{\varrho_{i}N_{i}(\sigma)d\sigma}+\sum_{i=L+1}^{N}\eta_{i}\chi_{i}(t)+\bar{c}$ is bounded.

	On the time interval $[0,t_f)$, let $\chi_{i*}=max\{\chi_{1}, \ldots, \chi_{L}\}$. With Lemma \ref{0.5} and Corollary \ref{coro}, we can always find an interval $[\chi_{n_{k,L}}^{in}, \chi_{n_{k,L}}^{out}],  k=1,\ldots,\infty$ such that $sign(\varrho_{L}){(-1)}^{n_{k,L}-1}\sin[{\frac{1}{{b_{L}}^{n_{k,L}-1}T_{L}}({\chi_L}-\chi_{n_{k,L}-1})}]\le-0.5$.  Clearly $\chi_{n_{k,L}}^{in}$ and $\chi_{n_{k,L}}^{out}$ are multiple and unbounded as $k \to \infty$. Assume that $\chi_{i*}$ has entered this interval. Then the analysis can be divided into two parts, i.e., $sign(\varrho_{1})=1$ and $sign(\varrho_{1})=-1$.

	\textit{	\textbf{Case One}: $sign(\varrho_{1})=1.$ }

	For this case, the analysis also has two parts, i.e., all $\chi_{i}$ have been on the interval $[\chi_{n_{k,L}}^{in}, \chi_{n_{k,L}}^{out}],  k=1,\ldots,\infty$, and only part of $\chi_{i}$ have been on this interval.
	
	(i) If all $\chi_{i}$ have been on the interval $[\chi_{n_{k,L}}^{in}, \chi_{n_{k,L}}^{out}],  k=1,\ldots,\infty$, the inequation (\ref{Vsep}) can be written as 
	\begin{align}
		\nonumber	V(t)\le&\sum_{i=1}^{L}\int_{0}^{\chi_{i}(t)}{\varrho_{i}N_{i}(\sigma)d\sigma}+\sum_{i=1}^{L}\eta_{i}\chi_{i}(t)+\tilde{c}\\
		\nonumber	\le&L(0.5\varrho_{min}G_{L}^{min}(\chi_{n_{k,L}})+\bar{\eta}\chi_{n_{k,L}}^{out})+\tilde{c}\\
		\nonumber	=&L(-0.5\varrho_{min}a_{L}T_{L}{b_L}^{n_{k,L}-1}+\bar{\eta}\chi_{n_{k,L}})+\tilde{c}\\
	\nonumber	=&LT_{L}{b_L}^{n_{k,L}}(-0.5\varrho_{min}\frac{a_{L}}{b_L}+\bar{\eta}\pi\frac{\frac{1}{{b_L}^{n_{k,L}}}-1}{1-b_L}	)+\tilde{c}\\
		=&LT_{L}{b_L}^{n_{k,L}}(-0.5\varrho_{min}\frac{a_{L}}{b_L}+\bar{\eta}\pi\frac{1}{{b_L}-1}	)+\tilde{c}.
	\end{align}	
	Since $a_i$ and $b_{i}$ are chosen as $a_{i}>\frac{2\bar{\eta}\pi b_{i}}{\varrho_{min}(b_i-1)}$ and $b_i>1$, it is obvious that $V(t)\to -\infty$ as $n_{k,L}\to \infty$, which yields a contradiction.

	(ii) If only part of $\chi_{i}$ have been on the interval. Without loss of generality, we consider only one agent entering the interval $[\chi_{n_{k,L}}^{in}, \chi_{n_{k,L}}^{out}]$, i.e., $\chi_{i*}\in [\chi_{n_{k,L}}^{in}, \chi_{n_{k,L}}^{out}]$ and $\chi_{i}<\chi_{n_{k,L}}^{in}, 1\le i \le L \quad and \quad i\neq i*$.

	With Lemmas \ref{0.5}, \ref{fangsuo}, and Corollary \ref{coro}, for the agent $i \in [1,L]$, regardless of whether $sign(\varrho_i)=1$ or $sign(\varrho_i)=-1$, with $\chi_{i*}$ entering the interval  $[\chi_{n_{k,L}}^{in}, \chi_{n_{k,L}}^{out}]$ and other $\chi_i$ being smaller than $\chi_{n_{k,L}}^{in}$, the following inequations are satisfied:
	\begin{align}
		\nonumber  	G_{i*} \le 0.5G_{L}^{min}(\chi_{n_{k,L}}),\\
		\nonumber  	G_{i, i\neq i*}\le G_{L}^{max}(\chi_{n_{k,L}-1}).
	\end{align}	
	Consequently, the inequation (\ref{Vsep}) can be written as:
	\begin{align}
		\nonumber  V(t)\le&\sum_{i=1}^{L}\int_{0}^{\chi_{i}(t)}{\varrho_{i}N_{i}(\sigma)d\sigma}+\sum_{i=1}^{L}\eta_{i}\chi_{i}(t)+\tilde{c}\\
		\nonumber  	=&\sum_{i=1, i\neq i*}^{L}\int_{0}^{\chi_{i}(t)}{\varrho_{i}N_{i}(\sigma)d\sigma}+\int_{0}^{\chi_{i*}(t)}{\varrho_{i*}N_{i*}(\sigma)d\sigma}\\	
		\nonumber  	&+\sum_{i=1, i\neq i*}^{L}\eta_{i}\chi_{i}+\eta_{i*}\chi_{i*}+\tilde{c}\\	
		\nonumber  	\le&(L-1)\varrho_{max}G_{L}^{max}(\chi_{n_{k,L}-1})+0.5\varrho_{min}G_{L}^{min}(\chi_{n_{k,L}})\\
		\nonumber &+\bar{\eta}(L-1)\chi_{n_{k,L}}^{in}+\bar{\eta}\chi_{n_{k,L}}^{out}+\tilde{c}\\	
		\nonumber  	=&G_{L}^{max}(\chi_{n_{k,L}-1})[(L-1)\varrho_{max}-0.5\varrho_{min}b_{L}]+\bar{\eta}L\chi_{n_{k,L}-1}\\	
		\nonumber  	&+\bar{\eta}(\frac{L}{6}+\frac{2}{3})(\chi_{n_{k,L}}-\chi_{n_{k,L}-1})+\tilde{c}\\	
		\nonumber  	=&T_{L}{a_L}{b_{L}}^{n_{k,L}-2}[(L-1)\varrho_{max}-\frac{\varrho_{min}}{2}b_{L}]+\\
		\nonumber  	&\bar{\eta}LT_{L}\pi\frac{1-b_{L}^{n_{k,L}-1}}{1-b_L}+\bar{\eta}(\frac{L}{6}+\frac{2}{3})T_{L}\pi b_{L}^{n_{k,L}-1}+\tilde{c}\\	
		\nonumber  	=&T_{L}a_L{b_{L}}^{n_{k,L}-1}[\frac{(L-1)\varrho_{max}}{b_L}-\frac{\varrho_{min}}{2}\\
		\nonumber&+\frac{\bar{\eta}\pi}{a_L}(\frac{L}{b_L-1}+\frac{L+4}{6})]+\tilde{c}\\
		=&T_{L}a_L{b_{L}}^{n_{k,L}-1}(\kappa+\frac{\bar{\eta}\pi\xi}{a_L})+\tilde{c},
	\end{align}
	where $\kappa=\frac{(L-1)\varrho_{max}}{b_L}-\frac{\varrho_{min}}{2}$ and $\xi=\frac{L}{b_L-1}+\frac{L+4}{6}$. Since $a_N$ and $b_{N}$ are chosen as $a_{N}>-\frac{\bar{\eta}\pi\xi}{\kappa}$ and $b_N>\frac{2(N-1)\varrho_{max}}{\varrho_{min}}$, it is obvious that $V(t)\to -\infty$ as $n_{k,L}\to \infty$, which yields a contradiction.
	
	Therefore, due to the above discussion, $V(t), \chi_{i}(t)$, and $\sum_{i=1}^{N}\int_{0}^{t}{\varrho_{i}N_{i}(\chi_{i})\dot{\chi_{i}}(\tau)d\tau}$ are all bounded on $[0,t_f)$.

\textit{	\textbf{Case Two}: $sign(\varrho_{1})=-1.$  }

The proof is similar to Case One and omitted here. The only difference is that the intervals $[\chi_{n_{k,L}}^{in}, \chi_{n_{k,L}}^{out}]$ will move backward by one period.

The proof is completed.

\end{proof}

\begin{remark}
	In this paper, we establish a special proportional relationship  for the $i$-th and $(i+1)$-th Nussbaum functions to construct contradictions by setting the parameters $a_i$, $b_i$, and $T_i$. With the special relationship, the maximum and minimum of $G_i$ can be scaled to two adjacent maximum and minimum values on the same Nussbaum function, i.e., the agent $L$. Different from existing papers, this method is greatly convenient for the analysis of the multiple Nussbaum functions in one equality of $V(t)$ by determining the interval $[\chi_{n_{k,L}}^{in}, \chi_{n_{k,L}}^{out}]$. The fully non-identical UCDs are also achieved in our analysis framework.
\end{remark}

\section{Control Design for Multi-Agent Systems with Fully Non-identical Unknown Control Directions}\label{sec4}

In this section, a Nussbaum-type function-based adaptive distributed control algorithm is proposed for the consensus problem of the first-order linearly parameterized multi-agent system (2). The control input $u_i$ is constructed as follows
\begin{align}\label{controlinput}
	u_i=&-N_i(\chi_i)u_{Ni},
\end{align}
where $N_i(\chi_i)$ is the Nussbaum-type gain designed as (3) for each agent associated with the value of $\chi_i$, and $u_{Ni}$ is an auxiliary controller designed as 
\begin{align}\label{auxinput}
	u_{Ni}=\sum_{i=1}^{N}a_{ij}(x_i-x_j)+\psi_{i}^T\hat{\theta_i},
\end{align}
where $\hat{\theta_i}$ is the estimated value of the parameter vector $\theta_i$. The adaptive updating laws for $\hat{\theta_i}$ and $\chi_{i}$ are designed as 
\begin{align}\label{paraupdate}
		\dot{\hat{\theta_i}}=&\zeta_i\psi_i[\sum_{i=1}^{N}a_{ij}(x_i-x_j)],\\
	\nonumber\dot{\chi_i}=&\gamma_i[\sum_{i=1}^{N}a_{ij}(x_i-x_j)]\times [\sum_{i=1}^{N}a_{ij}(x_i-x_j)+\psi_{i}^T\hat{\theta_i}],
\end{align}
with $\zeta_i$ and $\gamma_{i}$ being positive constants. The initial values $\hat{\theta_i}(0)$ and $\chi_{i}(0)$ are chosen arbitrary.

\begin{theorem}
	Consider the undirected graph $\mathcal{G}$ and
	the multi-agent system (\ref{system}), with the distributed adaptive control laws (\ref{controlinput}), (\ref{auxinput}), and parameters updating laws (\ref{paraupdate}), then the
	states of all agents are guaranteed to be bounded and asymptotically
	reach consensus.
\end{theorem}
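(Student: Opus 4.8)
The plan is to build a Lyapunov-type integral inequality that matches \emph{exactly} the hypothesis of Theorem~\ref{theorem}, and then close with Barbalat's Lemma. First I would abbreviate the neighbourhood error as $e_i=\sum_{j}a_{ij}(x_i-x_j)$, so that $e=Lx$, $u_{Ni}=e_i+\psi_i^\top\hat\theta_i$, $\dot\chi_i=\gamma_i e_i u_{Ni}$, $\dot{\hat\theta}_i=\zeta_i\psi_i e_i$, and the closed loop reads $\dot x_i=-\varrho_i N_i(\chi_i)u_{Ni}+\psi_i^\top\theta_i$. Then take the candidate $V=\tfrac{1}{2}x^\top L x+\sum_{i=1}^N\tfrac{1}{2\zeta_i}\tilde\theta_i^\top\tilde\theta_i$ with $\tilde\theta_i=\theta_i-\hat\theta_i$; the point is that neither the $\hat\theta_i$-update nor the $\chi_i$-update contains $\varrho_i$, so no $1/\varrho_i$ weighting is needed and the signs of the $\varrho_i$ may be arbitrary (this is where the novel Nussbaum machinery does the work). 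Differentiating along trajectories, using $L=L^\top$ and the two adaptation laws, the terms carrying the unknown $\theta_i$ cancel; substituting $\psi_i^\top\hat\theta_i=u_{Ni}-e_i$ reduces the rest to $\dot V=\sum_i\bigl(1-\varrho_i N_i(\chi_i)\bigr)e_i u_{Ni}-\sum_i e_i^2$, and then $e_i u_{Ni}=\dot\chi_i/\gamma_i$ gives $\dot V=\sum_i\tfrac{1}{\gamma_i}\dot\chi_i-\sum_i\tfrac{\varrho_i}{\gamma_i}N_i(\chi_i)\dot\chi_i-\sum_i e_i^2$.

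Integrating on $[0,t)$ and discarding the non-positive term $-\int_0^t\sum_i e_i^2\,d\tau$ yields $V(t)\le\sum_i\int_0^t(-\varrho_i/\gamma_i)N_i(\chi_i)\dot\chi_i\,d\tau+\sum_i\int_0^t(1/\gamma_i)\dot\chi_i\,d\tau+c$. Since $-\varrho_i/\gamma_i$ still has arbitrary sign with modulus in a compact positive interval and $1/\gamma_i>0$, this is precisely the inequality required by Theorem~\ref{theorem} (one may simply take $\gamma_i\equiv1$, or absorb the $\gamma_i$ into the bounds $\varrho_{min},\varrho_{max}$ used to size $a_N,b_N$). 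Applying that theorem on the maximal interval $[0,t_f)$ gives that $V(t)$, every $\chi_i(t)$, and $\sum_i\int_0^t\varrho_i N_i(\chi_i)\dot\chi_i\,d\tau$ are bounded; a standard no-finite-escape-time argument then forces $t_f=\infty$. Boundedness of $V$ gives $x^\top Lx\in L_\infty$, hence $Lx\in L_\infty$ and, by Lemma~1 (connectedness), the disagreement $x-\bar x\mathbf{1}$ is bounded, together with $\tilde\theta_i,\hat\theta_i\in L_\infty$; moreover, since the saturated gains satisfy $|N_i(\chi_i)|\le a_i$, the integrated identity for $\int_0^t\sum_i e_i^2\,d\tau$ has a bounded right-hand side, so $e_i\in L_2$.

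Finally I would invoke Lemma~\ref{barbalat}: with $\hat\theta_i$, $Lx$, $N_i(\chi_i)$ bounded and the regressor $\psi_i$ bounded, $\dot x_i$ is bounded, hence $\dot e_i=(L\dot x)_i$ is bounded and $\sum_i e_i^2$ is uniformly continuous; since $\int_0^t\sum_i e_i^2\,d\tau$ converges, Barbalat forces $\sum_i e_i^2\to 0$, i.e. $Lx\to 0$, which for the connected undirected graph is equivalent to $x_i-x_j\to 0$ for all $i,j$. Combined with the boundedness of the disagreement component this delivers bounded states and asymptotic consensus. I expect the main obstacle to be the bookkeeping that makes the integrated inequality fit the template of Theorem~\ref{theorem} verbatim — tracking the sign of $-\varrho_i$, the constants $\gamma_i$, and the residual boundary terms $\bar c$ — together with verifying the full boundedness/uniform-continuity chain needed for Barbalat, in particular that the regressor (and therefore $\dot e_i$) stays bounded along the closed loop.
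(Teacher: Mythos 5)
Your proposal is correct and follows essentially the same route as the paper: the identical Lyapunov candidate $V=\tfrac12 x^\top Lx+\tfrac12\tilde\theta^\top H\tilde\theta$, the same cancellation via the adaptation laws leading to $\dot V=-\sum_i e_i^2-\sum_i\tfrac{\varrho_i}{\gamma_i}N_i(\chi_i)\dot\chi_i+\sum_i\tfrac{1}{\gamma_i}\dot\chi_i$, integration to invoke Theorem~\ref{theorem}, and Barbalat's Lemma to conclude $e_i\to 0$ and hence consensus on the connected undirected graph. If anything, you are slightly more careful than the paper about absorbing the $\gamma_i$ and the sign of $-\varrho_i$ into the hypotheses of Theorem~\ref{theorem} and about the uniform-continuity step needed for Barbalat.
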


\begin{proof}
	Denote $x=[x_1,x_2,\dots,x_N]^T$ and $\tilde{\theta}=[\tilde{\theta}_1,\tilde{\theta}_2,\ldots,\tilde{\theta}_N]^T$, where $\tilde{\theta}_i=\hat{\theta_i}-\theta$. Note that $\mathcal{G}$ is connected and undirected, we have $e=Lx$, where $e=[e_1,e_2,\ldots,e_N]^T$ and $e_i=\sum_{i=1}^{N}a_{ij}(x_i-x_j)$ represents the combined error of agent $i$. 
	
	Consider the following Lyapunov function candidate for the system (\ref{system})	
	\begin{equation}\label{Lya}
		V=\frac{1}{2}x^TLx+\frac{1}{2}\tilde{\theta}^TH\tilde{\theta}
	\end{equation}
	where $H=[diag(\zeta_1,\ldots,\zeta_N)]^{-1}$ and Laplace matrix $L$ are both symmetric positive definite.

	Taking the time derivative of (\ref{Lya}) along (\ref{paraupdate}) yields \begin{small} \begin{align}\label{Lyadot}
		\nonumber	\dot{V}=&x^TL\dot{x}+\tilde{\theta}^TH\dot{\tilde{\theta}}\\
		\nonumber	=&-\sum_{i=1}^{N}e_{i}^2+\sum_{i=1}^{N}e_{i}^2-\sum_{i=1}^{N}e_i(\varrho_iN_i(\chi_i)e_i+\psi_{i}^T\hat{\theta_i})+\sum_{i=1}^{N}\tilde{\theta_i}\psi_{i}e_i\\
		\nonumber	=&-\sum_{i=1}^{N}e_{i}^2-\sum_{i=1}^{N}e_{i}(\varrho_iN_i(\chi_i)+1)u_{Ni}\\
		=&-\sum_{i=1}^{N}e_{i}^2-\sum_{i=1}^{N}\frac{\varrho_i}{\gamma_{i}}N_i(\chi_i)\dot{\chi_i}+\sum_{i=1}^{N}\frac{1}{\gamma_i}\dot{\chi_i}.
	\end{align}
\end{small}
	Integrating both sides of (\ref{Lyadot}) from 0 to $t$ yields	
	\begin{align}
		\nonumber	V(t)=&-\sum_{i=1}^{N}\int_{0}^{t}e_i^2(\tau)d\tau-\sum_{i=1}^{N}\int_{0}^{t}\frac{\varrho_i}{\gamma_{i}}N_i(\chi_i(\tau))\dot{\chi_i}(\tau)d\tau\\
		&+\sum_{i=1}^{N}\int_{0}^{t}\frac{1}{\gamma_i}\dot{\chi_i}(\tau)d\tau+\Delta,
	\end{align}
	where $\Delta$ is a bounded constant.
	
	Therefore, by applying Theorem \ref{theorem}, we can obtain that $V(t)$, $\chi_{i}(t)$,  and  $\sum_{i=1}^{N}\int_{0}^{t}{\frac{\varrho_{i}}{\gamma_{i}}N_{i}(\chi_{i})\dot{\chi_{i}}(\tau)d\tau}$ must be all bounded on $[0,t_f)$. As $t_f=\infty$, the system has no potential for finite-time escape  and $e_i^2(t)$ is integrable on $[0,\infty)$ \cite{renwei2014tac}. Consequently, based on Barbalat's Lemma \ref{barbalat}, we can obtain that $\lim_{t \to \infty}e_i^2(t)=0, 1\le i\le N$. Since $\mathcal{G}$ is undirected and connected, it implies that $\lim_{t \to \infty}(x_i(t)-x_j(t)) = 0$ as $\lim_{t \to \infty}e_i^2(t)=0$. 
	
	The proof is completed.
\end{proof}

\begin{figure}[!t]
	\begin{center}
		\includegraphics[width=0.2\textwidth]{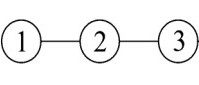}
		\caption{Communication topology for a group of three agents.}\label{topology}
	\end{center}   
\end{figure}

\section{Simulation}\label{sec5}

\begin{figure*}[!t]
	\centering
	\subfloat[]{\label{novelu}
		\includegraphics[width=0.5\textwidth]{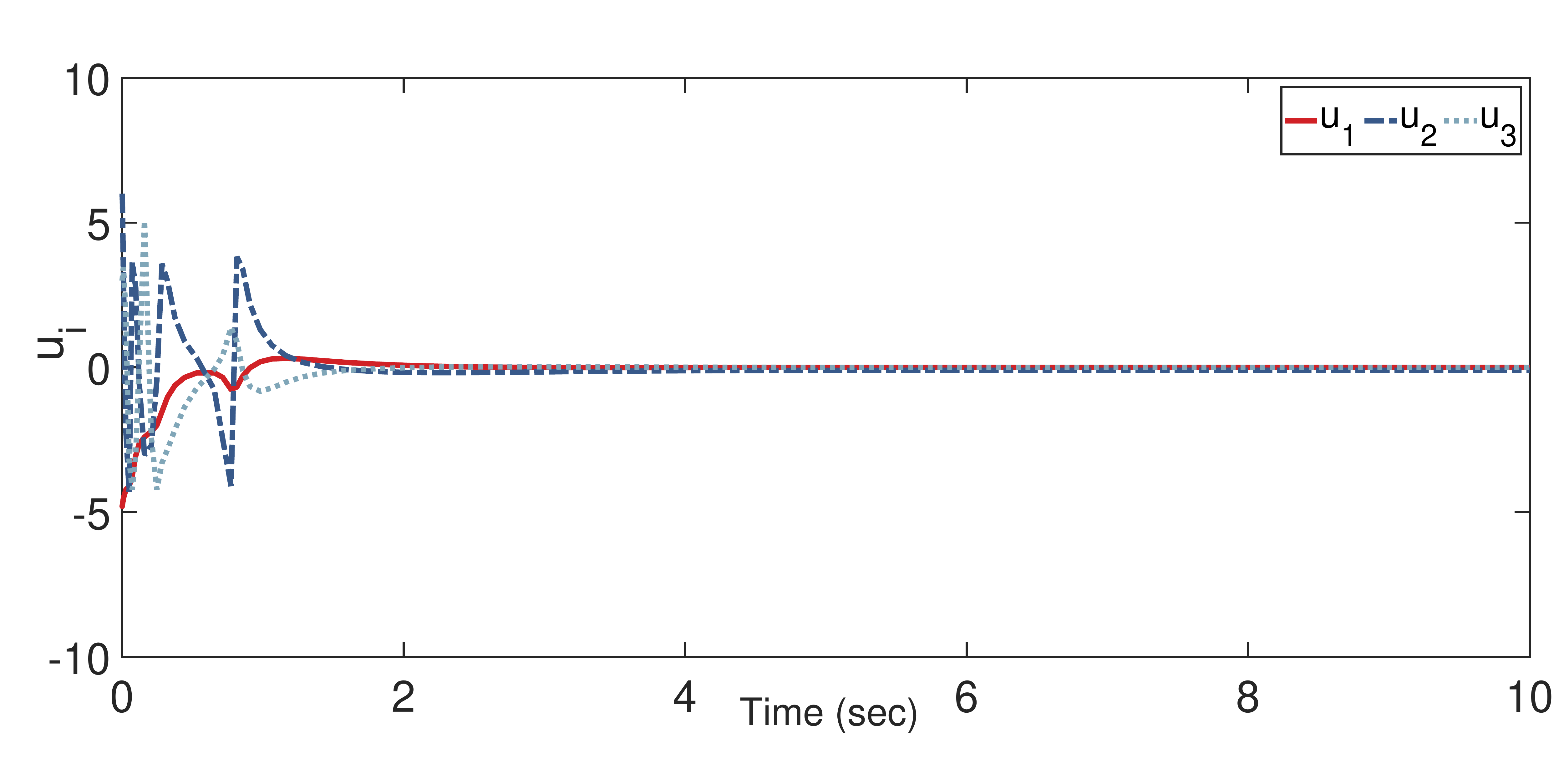}}
	\subfloat[]{\label{novelx}
		\includegraphics[width=0.5\textwidth]{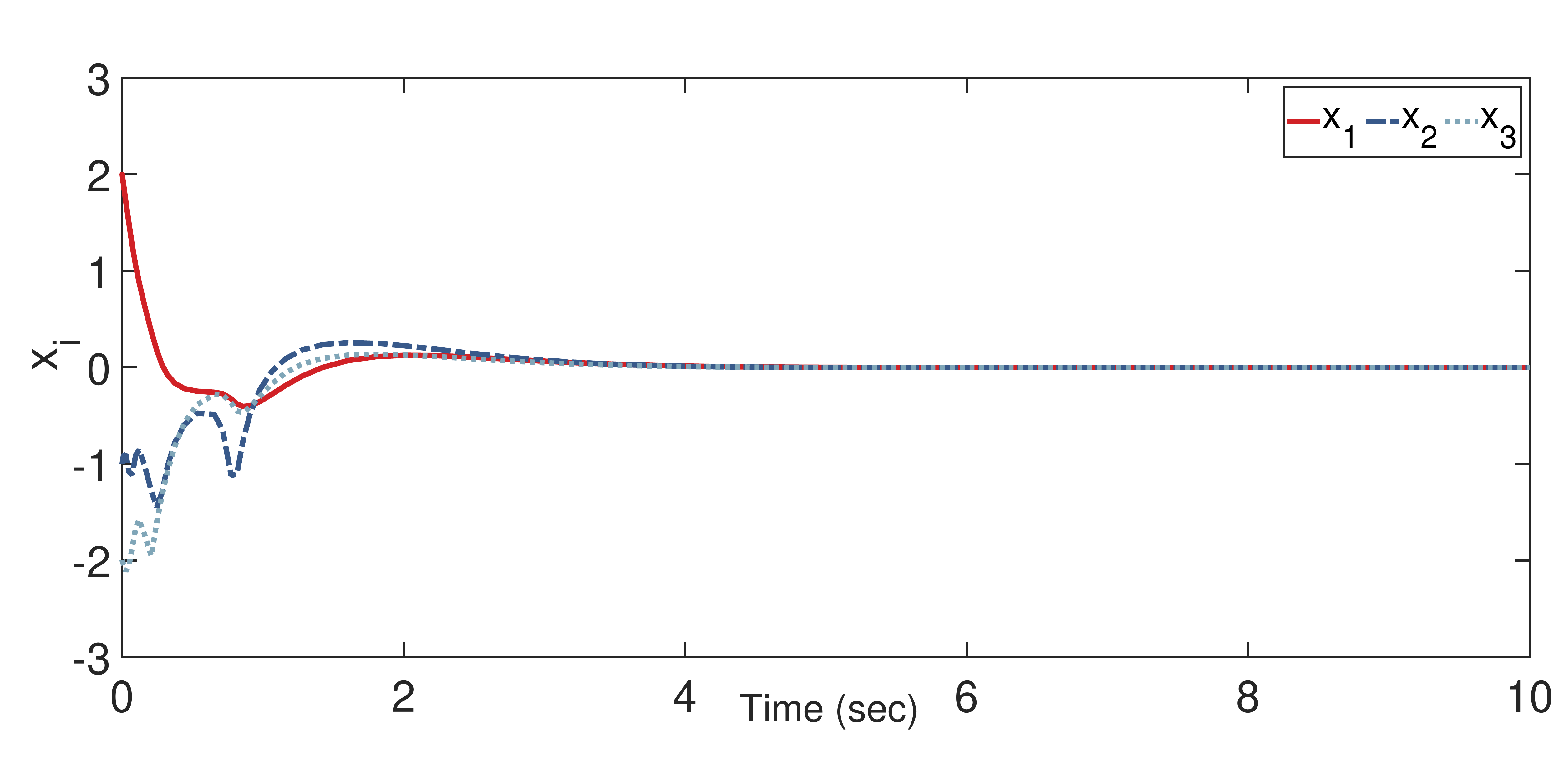}}	
	
	\subfloat[]{\label{novelchi}
		\includegraphics[width=0.5\linewidth]{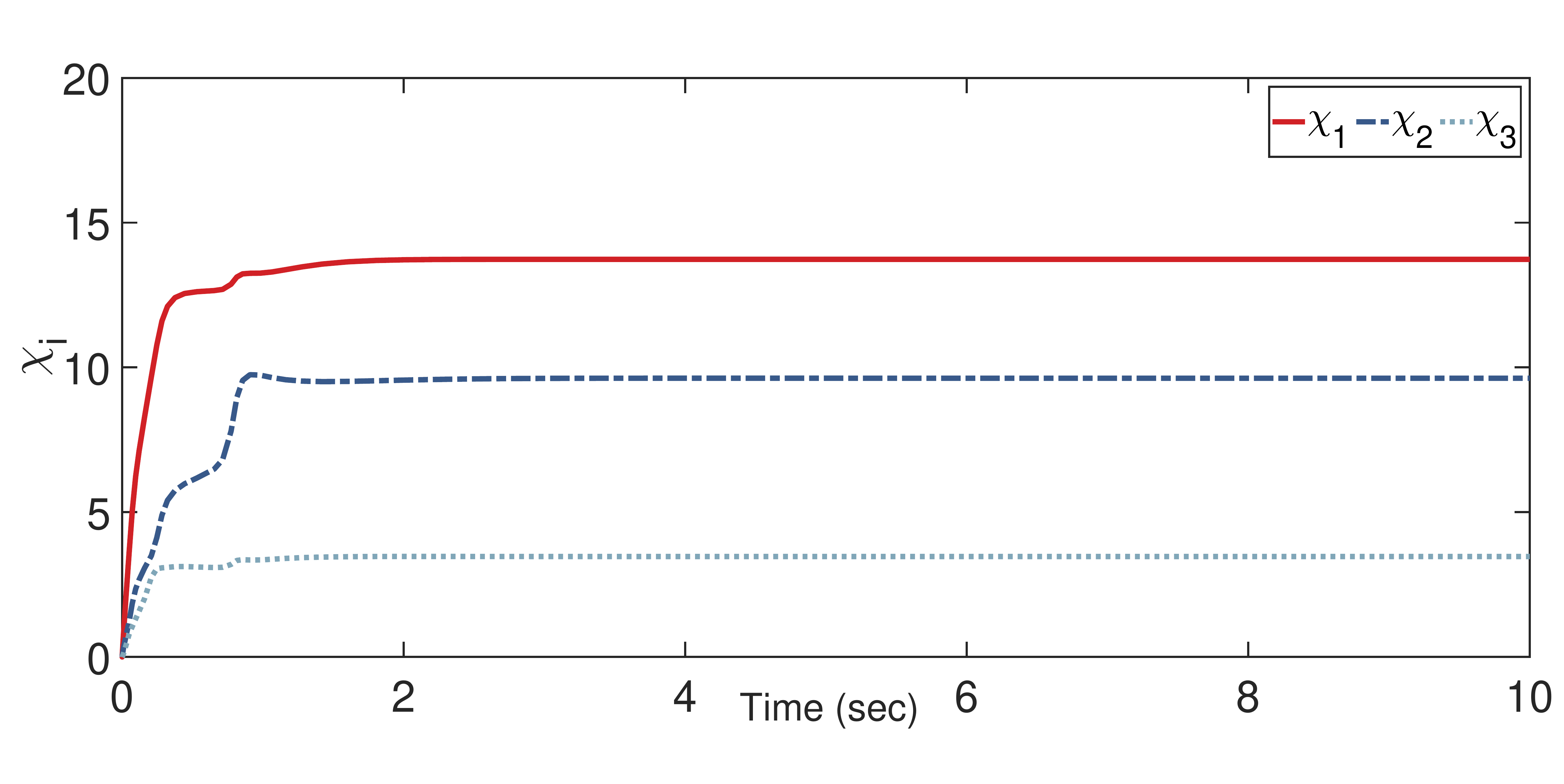}}
	\subfloat[]{\label{novelN}
		\includegraphics[width=0.5\textwidth]{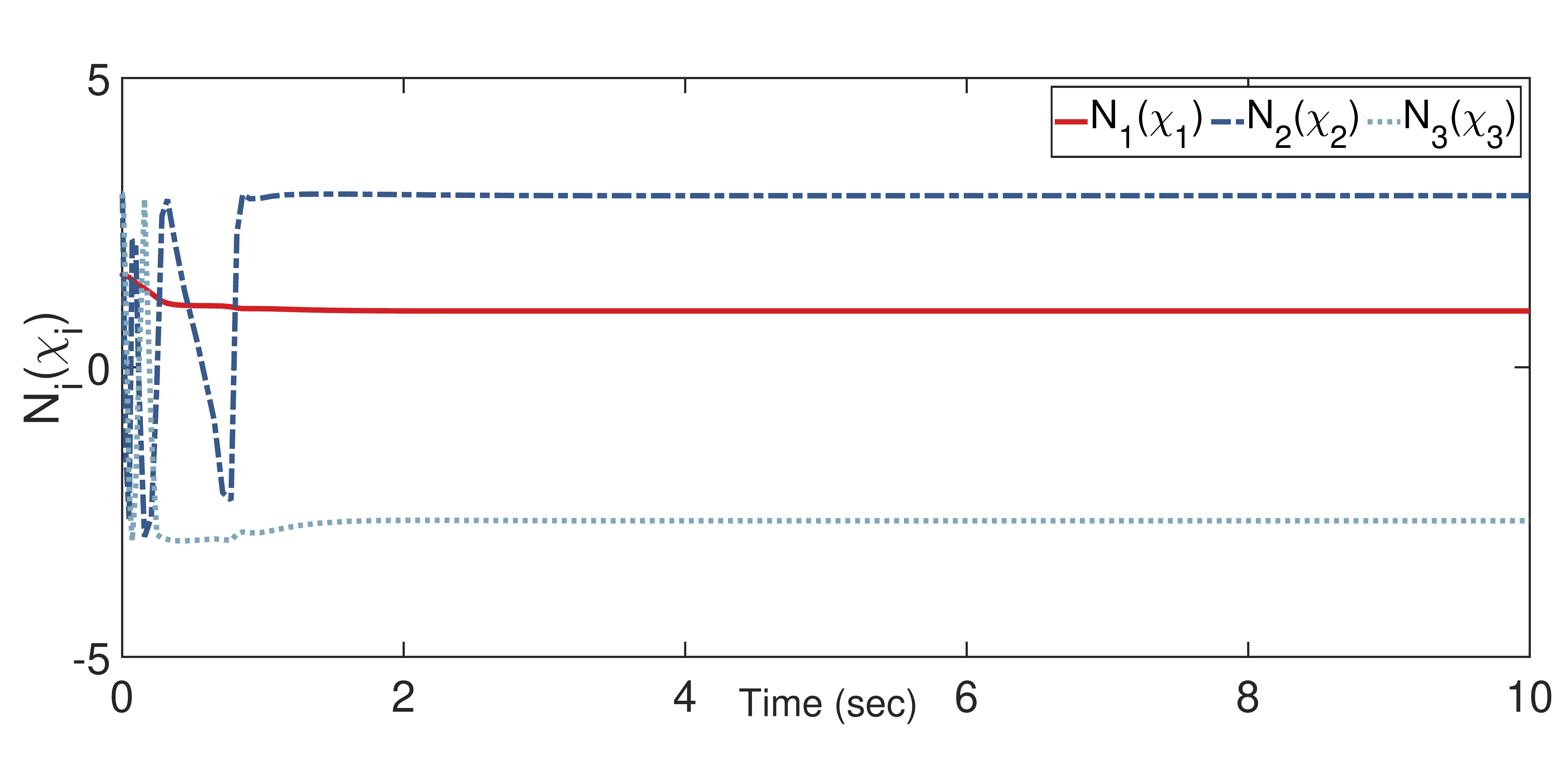}}
	\caption{Results by the novelly proposed Nussbaum based approach. \protect\subref{novelu} Control input $u_i$, \protect\subref{novelx} State $x_i$, \protect\subref{novelchi} Updating gain $\chi_i$, \protect\subref{novelN} Saturated Nussbaum function $N_i(\chi_i)$.}\label{novel}
\end{figure*}

In this section, a numerical simulation of the consensus of the system (\ref{system}) with 3 agents is provided to illustrate the proposed control algorithm, as well as a comparative study between the traditional Nussbaum functions  \cite{HUANG2018auto} based controller and the novel proposed saturated Nussbaum functions (\ref{OriNuss}) based controller in this paper. The communication topology among agents is shown in Fig. \ref{topology}. To facilitate the comparison, the initial values of the system and the gain coefficients of the controller in both control algorithms are selected the same. The only differences are the Nussbaum functions and associated parameters.

Here we consider the same linearly parameterized system (\ref{system}) as that in the paper \cite{renwei2014tac}, in which the regression vectors are taken as $\psi_1(x_1)=\sin(x_1)$, $\psi_2(x_2)=\cos(x_2)$, and $\psi_3(x_3)=x_3$; the parameter vectors are set as $\theta_1=-1.1, \theta_2=0.2$, and $\theta_3=-0.6$; the gains in the updating law (\ref{paraupdate}) for $\hat{\theta_i}$ are set as $\zeta_1=0.6, \zeta_2=1.6$, and $\zeta_3=2.3$; the initial conditions of $x_i$ are set as $x_1=2, x_2=-1$ and $x_3=-2$; the initial value of $\theta_i$ is set as zero. Differently, the unknown control directions are assumed to be arbitrary directions instead of the identical directions in \cite{renwei2014tac}. The unknown coefficients of control input in this paper are specified as $\rho_1=2.2, \rho_2=2$, and $\rho_3=-1.8$. Then we will show the effectiveness and comparison of control algorithms with novel saturated Nussbaum functions (\ref{OriNuss}) and the traditional Nussbaum functions in \cite{HUANG2018auto}.

\textit{Consensus algorithms with novel Nussbaum functions:}
The control parameters are set as $b_1=3^{16}, b_2=3^4$, and $b_3=3$; $a_1={a_2}\times\frac{3^{16}-3^4}{3^{16}-1}, a_2={a_3}\times\frac{27}{40}$, and $a_3=3$; $T_1=T_2\times\frac{1}{1+3^4+3^8+3^{16}},T_2=T_3\times\frac{1}{40}$, and $T_3=100$; $\gamma_{i}=10$. Fig. \ref{novel} shows the simulation results of consensus algorithms with novel Nussbaum functions. It can be witnessed that all agents achieve consensus in 3 seconds, and $\chi_i$ and $N_i(\chi_i)$ are all bounded which consistent with the Theorem \ref{theorem}. It is worth pointing that the control input $u_i$ is bounded, as well as the value of Nussbaum gain $N_i(\chi_i)$ is also bounded in the interval $[-a_3, a_3]$.

\textit{Consensus algorithms with Nussbaum functions in \cite{HUANG2018auto}:}
Following the Nussbaum functions in  \cite{HUANG2018auto}, i.e. $N_i(\chi_i)=\frac{\alpha^2\beta^{2i}+1}{\beta^i\sqrt{\alpha^2\beta^{2i}+1}}e^{\alpha|\chi_i|}\sin(\frac{\chi_i}{\beta^i})$, the parameters in this compared Nussbaum functions are chosen as $\gamma_i=5$, $\alpha=4$, and $\beta=0.2$. Fig. \ref{standard} shows the simulation results of consensus algorithms with traditional Nussbaum functions \cite{HUANG2018auto}. It can be witnessed that all agents achieve consensus in 2 seconds, which is faster than the novel Nussbaum-based consensus algorithms. Note that the control input $u_i$  quickly stabilizes to a constant value but there is a dramatical control shock at the begining period. The value of Nussbaum gain $N_i(\chi_i)$ is also bounded in a much wider interval than $[-a_3, a_3]$.

\begin{figure*}[!t]
	\centering
	\subfloat[]{\label{huangu}
		\includegraphics[width=0.5\textwidth]{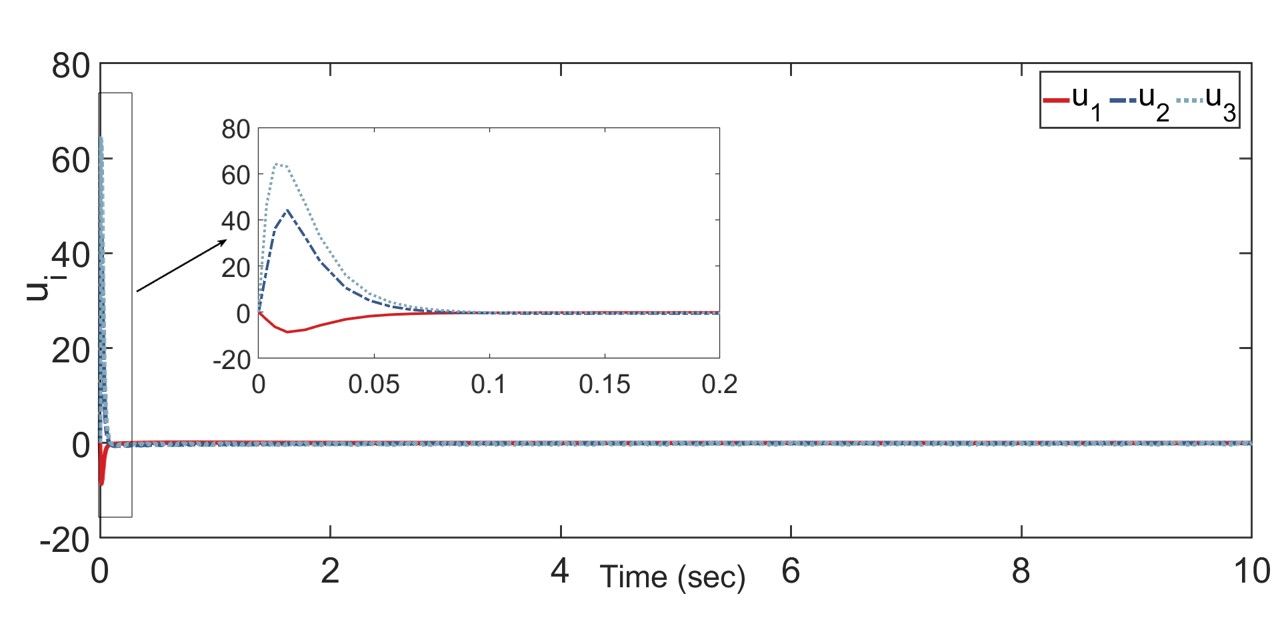}}
	\subfloat[]{\label{huangx}
		\includegraphics[width=0.5\textwidth]{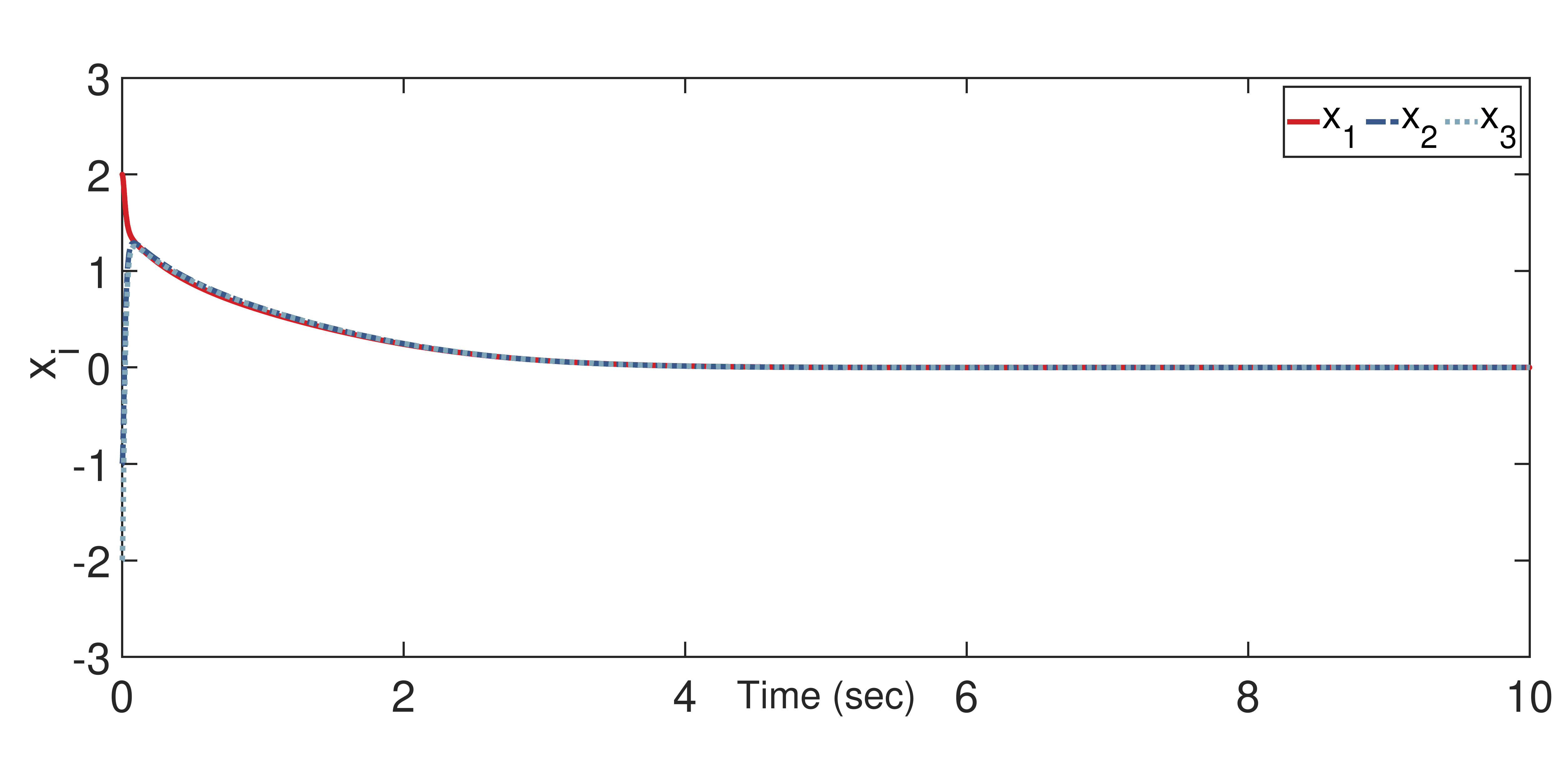}}	
	
	\subfloat[]{\label{huangchi}
		\includegraphics[width=0.5\linewidth]{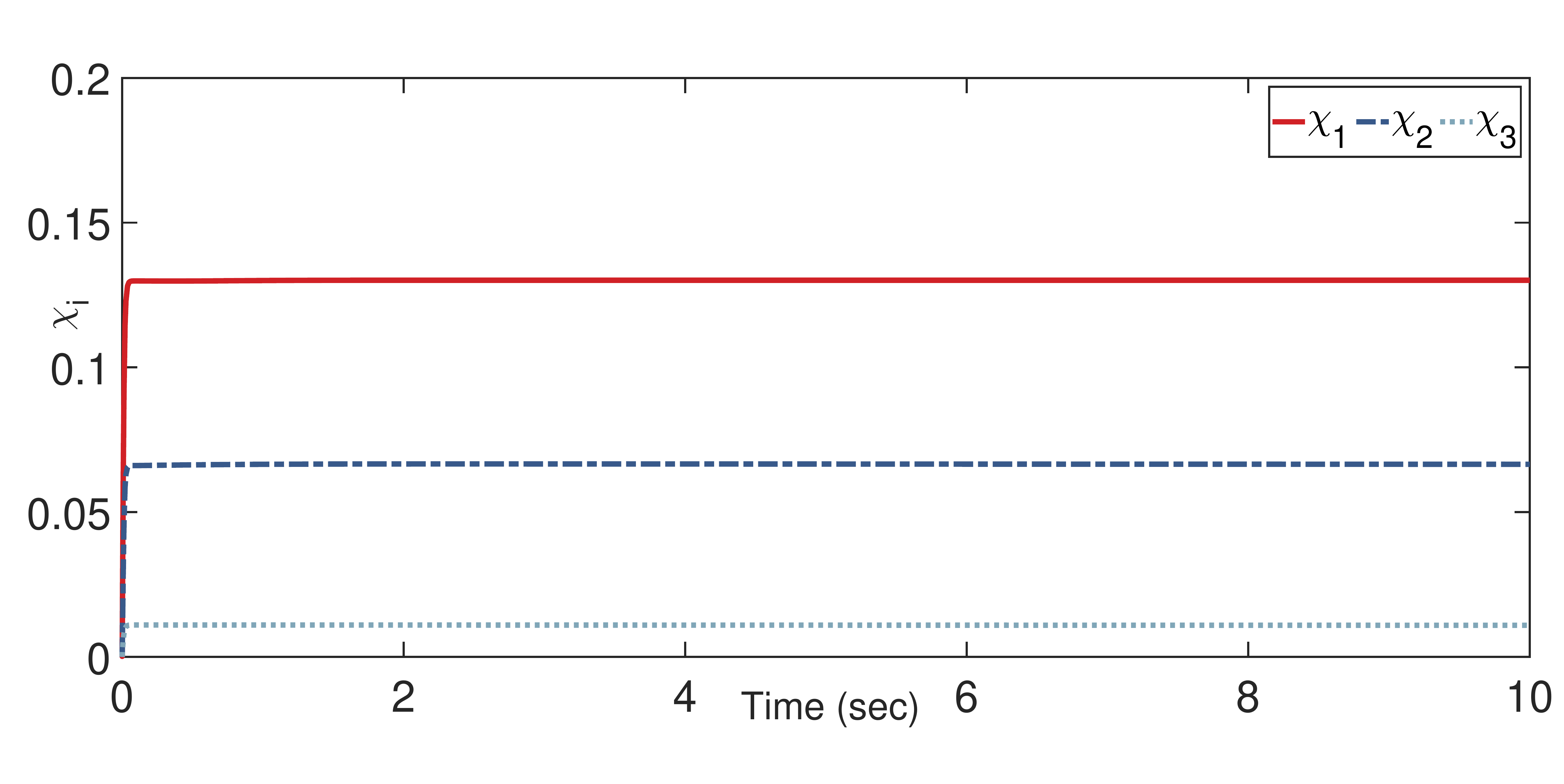}}
	\subfloat[]{\label{huangN}
		\includegraphics[width=0.5\textwidth]{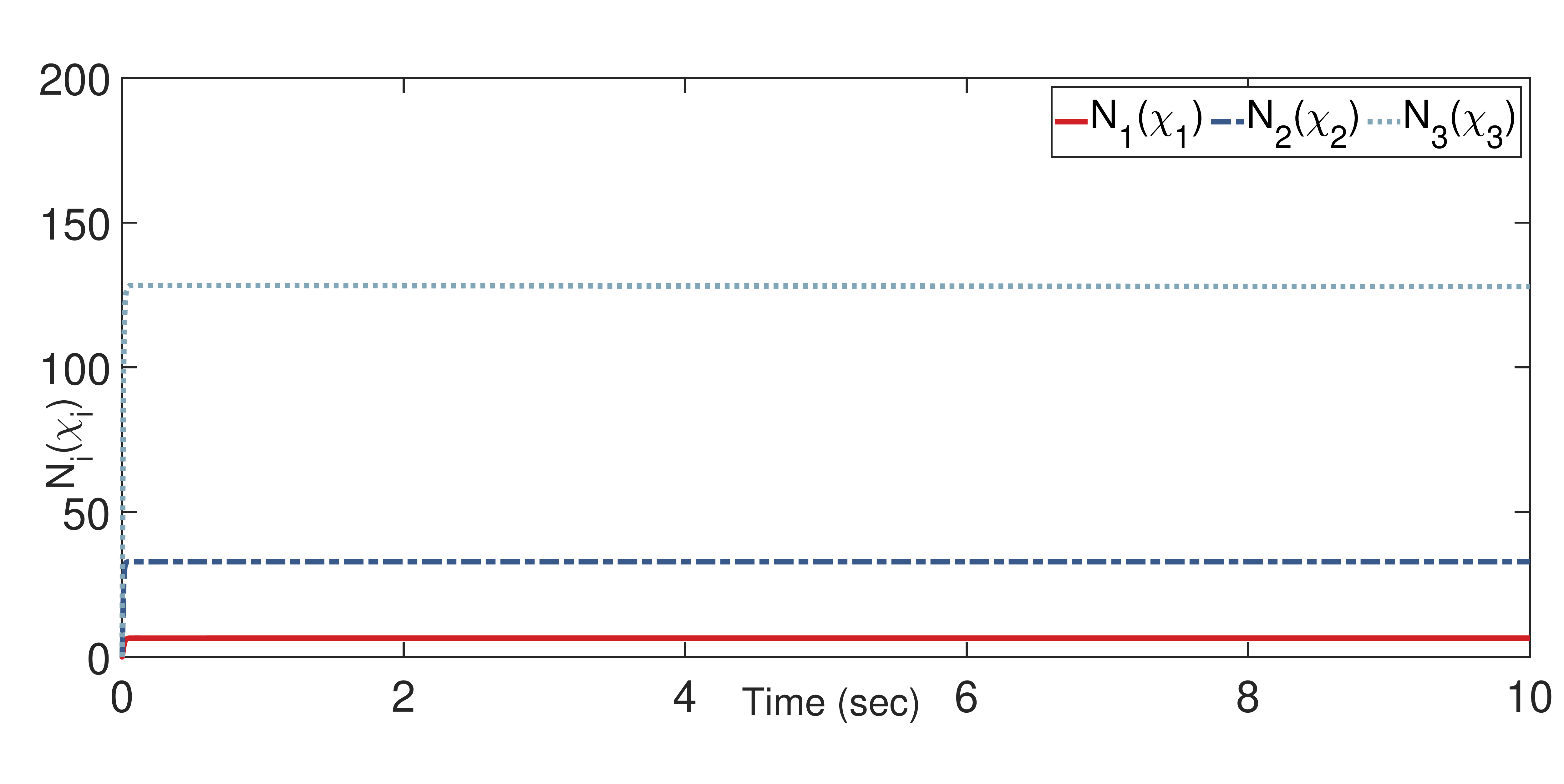}}
	\caption{Results by the novelly proposed Nussbaum based approach.		\protect\subref{huangu} Control input $u_i$, \quad
		\protect\subref{huangx} State $x_i$,
		\protect\subref{huangchi} Updating gain $\chi_i$,\quad \protect\subref{huangN} Traditional Nussbaum function $N_i(\chi_i)$.}\label{standard}
\end{figure*}

Similar to \cite{chenci2016Saturated}, we also quantify the transient performance of control inputs by leveraging the maximum absolute index (MAI) as $|u_i|_{M}=\max_{0\le t\le T} |u_i|$ during the overall control process. The details are presented in Table 1. From the table, we can conclude that $|u_1|_M, |u_2|_M$, and $|u_3|_M$ of the proposed control are $92.5\%, 86.7\%$, and $54.5\%$ lower than that of \cite{HUANG2018auto} respectively, which means that the control shock effect has been greatly suppressed, as well as the improved control smoothness and safety. The boundary of control input can be adjusted by selecting appropriate parameter $a_i$.

\begin{table}[htbp]
	\renewcommand{\tablename} 
	\caption{\centering{TABLE I} \protect \\ \qquad \qquad \quad COMPARISONS RESULTS ON MAI}\\
	
\begin{tabular}{|c|c|c|}
		\hline
		MAI & \begin{tabular}[c]{@{}c@{}}Proposed Saturated \\ Nussbaum based Control\end{tabular} & \begin{tabular}[c]{@{}c@{}}Traditional Nussbaum \\ based Control\cite{HUANG2018auto}\end{tabular} \\ \hline
		$|u_1|_M$	& 4.8 & 64.0 \\ \hline
		$|u_2|_M$	& 5.7 & 43.0 \\ \hline
		$|u_3|_M$	& 5.0 & 11.0 \\ \hline
	\end{tabular}
\end{table}

\begin{remark}
	From the above simulation results, it can be found that the consensus control inputs with the novel proposed Nussbaum functions have a finite saturation boundary. Compared with the consensus control inputs using the traditional Nussbaum functions \cite{HUANG2018auto}, the control shock has been greatly suppressed. This effect was not caused by the selection of specific parameters, but a fundamental change in the form of the Nussbaum function design. The boundary of the Nussbaum gain $N_i(\chi_i)$ can be adjusted by changing parameter $a_i$, which in turn affects the boundary of the control shock. Although the time-elongation saturation Nussbaum functions may cause the system to run in the wrong direction for a bit longer time, it is an affordable price for a better transient performance with limited control shocks.
\end{remark}

\section{Conclusion}\label{sec6}

In this paper, the consensus problem of a group of linearly parameterized first-order multi-agent systems with arbitrary non-identical unknown control directions has been investigated. By proposing a novel saturated Nussbaum-type function, distributed adaptive consensus protocols with an adjustable limited control shock are designed based on local neighbor information. With Barbalat's Lemma, it has been theoretically proved that all signals in the closed-loop system are bounded and the average asymptotically consensus of the agents is achieved. A comparative simulation study between novel Nussbaum functions-based approach and traditional  Nussbaum functions \cite{HUANG2018auto} based approach has also been provided to illustrate the effectiveness and the better transient control performance of the novel Nussbaum function based consensus strategies. It is noted that the arbitrary non-identical unknown control directions and control shocks are both handled in this paper, but the more relaxed parameter selection and the robustness of the Nussbaum function based control scheme remain to be investigated for further research.

\begin{ack}                               
This work is supported in part by the National Natural Science Foundation of China under Grants 61503016 and 61977004, in part by the National Key R\&D Program of China (No.2017YFB0103202), in part by the Fundamental Research Funds for the Central Universities (No.YWF-20-BJJ-634, No.YWF-19-BJ-J-259). 

Thanks for the discussions and advice of Dr. Jiangshuai Huang at Chongqing University in China and Dr. Haris. E. Psillakis at National Technical University of Athens in Greece during this work.  
\end{ack}

\bibliographystyle{unsrt}        
\bibliography{autosam}           



\appendix
\section{Proof of Lemma 3.3}    

\begin{proof}
	First, consider the constructed special relationship of $G_i(\chi_i)$ and $G_{i+1}(\chi_{i+1})$ as shown in Fig. 2. Without loss of generality, assume that $M=4$, $T_i=1$, and $sign({\varrho}_i)=1$. Clearly, over the interval $[\chi_{2k-2,i}, \chi_{2k,i}), k=1,2,\ldots$, we have $G_i(\chi_i)<0$ when $\chi_i\in[\chi_{2k-1,i}, \chi_{2k,i})$. Due to $M=4$, we have $n_{k,i+1}=8k$ when $n_{k,i}=2k$, i.e., $\chi_{2k,i}=\chi_{8k,i+1}$.
	
	\textbf{Case 1}: The UCDs of agent $i$ and agent $i+1$ are identical, i.e.,  $sign({\varrho}_{i+1})=sign({\varrho}_{i})$. 
	
	First, as $sign({\varrho}_{i+1})=sign({\varrho}_{i})=1$, the integration value $G_{i+1}(\chi_{i+1})$ is negative on the intervals  $\chi_{i+1}\in[\chi_{8k-3,i+1}, \chi_{8k-2,i+1})$ and $\chi_{i+1}\in[\chi_{8k-1,i+1}, \chi_{8k,i+1})$. Considering that it is not convenient to discuss the range of the right end point $\chi_{n_{k,i+1}}^{out}$ of the interval $[\chi_{n_{k,i+1}}^{in}, \chi_{n_{k,i+1}}^{out}]$, we only investigate the situation that $\chi_i\in[\chi_{8k-3,i+1}, \chi_{8k-2,i+1})$. 
	
	Take $\sin[{\frac{1}{{b_{i+1}}^{8k-3}T_{i+1}}({\chi_{i+1}}-\chi_{8k-3,i+1}})]=0.5$ as an example. After simple calculation, the intervals $[\chi_{n_{k,i+1}}^{in}, \chi_{n_{k,i+1}}^{out}],  k=1,\ldots,\infty$ satisfy the following conditions as 	
	\begin{align}
		\chi_{{8k-3,i+1}}^{in}=\chi_{8k-3,i+1}+\frac{1}{6}{b_{i+1}}^{8k-3}T_{i+1}\pi,\\
		\chi_{{8k-3,i+1}}^{out}=\chi_{8k-2,i+1}-\frac{1}{6}{b_{i+1}}^{8k-3}T_{i+1}\pi,
	\end{align}
	which means that the integration values satisfy the inequality conditions that $G_{i+1}(\chi_{i+1})\le-0.5a_{i+1}T_{i+1}k_{i+1}^{8k-3}$,  $k=1,\ldots,\infty$.
	
	Then, we discuss the range of the interval $[\chi_{n_{k,i}}^{in}, \chi_{n_{k,i}}^{out}]$  on the Nussbaum function of agent $i$, which satisfies that   $G_{i}(\chi_{i})\le-0.5a_{i+1}T_{i+1}{b_{i+1}}^{8k-3}$. Similarly, we have the following conditions as	
	\begin{align}
		\chi_{{2k-1,i}}^{in}=\chi_{2k-1,i}+{b_{i}}^{2k-1}T_{i}arcsin{\frac{1}{2b_{i+1}^{2}}},\\
		\chi_{{2k-1,i}}^{out}=\chi_{2k,i}-{b_{i}}^{n_{k,i}-1}T_{i}arcsin{\frac{1}{2b_{i+1}^{2}}}.
	\end{align}
	Since $b_{i+1}>1$, we can easily calculate that $\chi_{{2k-1,i}}^{in}<\chi_{{8k-3,i+1}}^{in} \quad and \quad \chi_{{2k-1,i}}^{out}>\chi_{{8k-3,i+1}}^{out}$, which means that over the interval $[\chi_{{8k-3,i+1}}^{in}, \chi_{{8k-3,i+1}}^{out}]$, the following inequality conditions exist	
	\begin{align}
		G_{i}(\chi_{i})&\le0.5G_{i+1}^{min}(\chi_{8k-3, i+1})<0,
		\\ G_{i+1}(\chi_{i+1})&\le0.5G_{i+1}^{min}(\chi_{8k-3, i+1})<0.
	\end{align}
	\textbf{Case 2}: The UCDs of agent $i$ and agent $i+1$ are opposite, i.e., $sign({\varrho}_{i+1})=-sign({\varrho}_{i})$.

	Similarly, as $sign({\varrho}_{i+1})=-sign({\varrho}_{i})=-1$, the integration value $G_{i+1}(\chi_{i+1})$ is negative on the intervals  $\chi_{i+1}\in[\chi_{8k-4,i+1}, \chi_{8k-3,i+1})$ and $\chi_{i+1}\in[\chi_{8k-2,i+1}, \chi_{8k-1,i+1})$. Considering that it is not convenient to discuss the range of the left end point $\chi_{n_{k,i+1}}^{in}$ of the interval $[\chi_{n_{k,i+1}}^{in}, \chi_{n_{k,i+1}}^{out}]$, we only investigate the situation that $\chi_i\in[\chi_{8k-2,i+1}, \chi_{8k-1,i+1})$.
	
	Take $\sin[{\frac{1}{{b_{i+1}}^{8k-3}T_{i+1}}({\chi_{i+1}}-\chi_{8k-3,i+1}})]=0.5$ as an example. After simple calculation, the intervals $[\chi_{n_{k,i+1}}^{in}, \chi_{n_{k,i+1}}^{out}],  k=1,\ldots,\infty$ satisfy the following conditions as	
	\begin{align}
		\chi_{{8k-2,i+1}}^{in} & =\chi_{8k-2,i+1}+\frac{1}{6}{b_{i+1}}^{8k-2}T_{i+1}\pi,\\
		\chi_{{8k-2,i+1}}^{out} & =\chi_{8k-1,i+1}-\frac{1}{6}{b_{i+1}}^{8k-2}T_{i+1}\pi,
	\end{align}
	which means that the integration values satisfy the inequality conditions that $G_{i+1}(\chi_{i+1})\le-0.5a_{i+1}T_{i+1}k_{i+1}^{8k-2}$,  $k=1,\ldots,\infty$.
	
	Similarly, we have the following conditions for the Nussbaum function of $i$th agent as	
	\begin{align}
		\chi_{{2k-1,i}}^{in}=\chi_{2k-1,i}+{b_{i}}^{2k-1}T_{i}arcsin{\frac{1}{2b_{i+1}}},\\
		\chi_{{2k-1,i}}^{out}=\chi_{2k,i}-{b_{i}}^{n_{k,i}-1}T_{i}arcsin{\frac{1}{2b_{i+1}}}.
	\end{align}
	Since $b_{i+1}>1$, we can easily calculate that $\chi_{{2k-1,i}}^{in}<\chi_{{8k-2,i+1}}^{in} \quad and \quad \chi_{{2k-1,i}}^{out}>\chi_{{8k-2,i+1}}^{out}$, which mean that over the interval $[\chi_{{8k-2,i+1}}^{in}, \chi_{{8k-2,i+1}}^{out}]$, the following inequality conditions exist	
	\begin{align}
		G_{i}(\chi_{i})&\le0.5G_{i+1}^{min}(\chi_{8k-2, i+1})<0,
		\\ G_{i+1}(\chi_{i+1})&\le0.5G_{i+1}^{min}(\chi_{8k-2, i+1})<0.
	\end{align}
	
	In conclusion, for agent $i$ and $i+1$, over the interval $[\chi_{2k-2,i}, \chi_{2k,i})$, there must exist an interval  $[\chi_{n_{k,i+1}}^{in}, \chi_{n_{k,i+1}}^{out}]$ on which $sign({\varrho}_{i+1})G_{i+1}({\chi_i+1}) \le0.5G_{i+1}^{min}(\chi_{i+1})<0$ and  $sign({\varrho}_i)G_i({\chi_i}) \le0.5G_{i+1}^{min}(\chi_{i+1})<0$. The maximum value of the integration value $G_{i}(\chi_{i})$ depends on the minimum value of  the integration value $G_{i+1}(\chi_{i+1})$.
	
	The proof is completed.
\end{proof}
\end{document}